\let\vec\@undefined
\newtheorem{observation}{Observation}
\newcommand{\cT}{{\mathcal T}}
\newcommand{\cL}{{\mathcal L}}
\newcommand{\cF}{{\mathcal F}}
\newcommand{\cG}{{\mathcal G}}
\newcommand{\cX}{{\mathcal X}}
\newcommand{\mrca}{{\rm lca}}
\def\ball{\scalebox{0.5}{\rule[0.14cm]{0.15cm}{0.06cm}}}
\def\cherryA{\mathrel{\raisebox{-2pt}{$\ball$\hspace*{-0.5mm}}}}
\def\cherryB{\hspace*{-1.3mm}\mathrel{\raisebox{-2pt}{$\ball$}}}
\def\wedgeB{\ensuremath{\scalebox{1.4}{$\wedge$}}}
\def\fc{$\ensuremath{\wedge}$}
\def\fcL{\cherryA\joinrel$\wedgeB$\joinrel\hspace{0.8mm}\cherryB}
\begin{document}

\title{A first step towards computing all hybridization networks for two rooted binary phylogenetic trees
}


\author{\mbox{
 Celine Scornavacca$^{*}$  \and Simone Linz$^{*}$ \and
 Benjamin Albrecht} 
}

\authorrunning{Celine Scornavacca \and Simone Linz \and Benjamin Albrecht} 
\institute{
  C. Scornavacca \at
   Center for Bioinformatics (ZBIT), T\"ubingen University, Sand 14, 72076 T\"ubingen, Germany.\\
   Tel.: +49 (0)70 71 29 70455\\
   Fax: +49 (0)70 71 29 5148\\
   \email{scornava@informatik.uni-tuebingen.de}
\and
S. Linz \at
   Center for Bioinformatics (ZBIT), T\"ubingen University, Sand 14, 72076 T\"ubingen, Germany.\\
   Tel.: +49 (0)70 71 29 70455\\
   Fax: +49 (0)70 71 29 5148\\
   \email{linz@informatik.uni-tuebingen.de}   
   \and
   B. Albrecht \at
   Center for Bioinformatics (ZBIT), T\"ubingen University, Sand 14, 72076 T\"ubingen, Germany.\\
   \email{benjamin.albrecht@student.uni-tuebingen.de}\\[2mm]
      $^{*}$Equally contributing authors.
}

\titlerunning{Towards computing all hybridization networks for two rooted phylogenies}

\date{Received: date / Accepted: date}

\maketitle

\begin{abstract}
Recently, considerable effort has been put into developing fast algorithms to reconstruct a rooted phylogenetic network that explains two rooted phylogenetic trees and has a minimum number of {hybridization} vertices. With the standard approach to tackle this problem being combinatorial, the reconstructed network is rarely unique. From a biological point of view, it is therefore of importance to not only compute one network, but {\it all} possible networks. In this paper, we make a first step towards approaching this goal by presenting the first algorithm---called {\sc allMAAFs}---that calculates all maximum-acyclic-agreement forests for two rooted binary phylogenetic trees on the same set of taxa. 

\keywords{Directed acyclic graphs\and Hybridization\and Maximum-acyclic-agreement forests\and Bounded search\and Phylogenetics }
\end{abstract}

\section{Introduction}
\label{sec:introall}
Over the last decade, 
significant progress in phylogenetic studies has been achieved by combining the expertise 
acquired in the fields of biology, computer science, and mathematics. As for the latter, combinatorics is becoming increasingly important in approaching many problems in the context of reticulate evolution (e.g., see~\cite{huson2011survey,Semple2007} for two excellent reviews) which is an umbrella term for processes such as horizontal gene transfer, hybridization, and recombination. To analyze reticulation in evolution, the graph-theoretic concept of an agreement forest for two rooted phylogenetic trees has  attracted much attention (e.g.~\cite{BaroniEtAl2005,quantifyingreticulation,whiddenWABI,ChenWang2010,ASH}). 
However, most approaches that make use of this concept aim at quantifying the amount of reticulation that is needed to simultaneously explain a set of rooted phylogenetic trees. Thus, one is primarily interested in the number of horizontal gene transfer, hybridization, or recombination events that occurred during the evolution of a set of present-day species. 
Consequently, these approaches do not explicitly construct a rooted phylogenetic network that
 explains a set of phylogenetic trees.
 Nevertheless,
this is desirable from a biological point of view because such a network intuitively indicates how species may have evolved by means of speciation and reticulation. While each vertex of a phylogenetic tree has exactly one direct ancestor, a vertex of a phylogenetic networks may have more than one such ancestor; thereby indicating that the genome of the underlying species is a combination of the genomes of distinct parental species. Generically, we refer to such a vertex as a reticulation vertex or, more specific in the context of hybridization, as a hybridization vertex. Since reticulation events are assumed to be significantly less frequent than speciation events, current research aims at constructing a rooted phylogenetic network that explains a set of rooted phylogenetic trees and whose number of reticulation vertices is minimized.

For the purpose of the introduction, think of a so-called {\it maximum-acyclic-agreement forest} $\cF$ for two rooted binary phylogenetic trees $S$ and $T$ as a small collection of vertex-disjoint rooted subtrees that are common to $S$ and $T$ (for details, see Section~\ref{sec:prelim}). It is well-known that the size of $
\cF$ minus 1 equates to the minimum number of {hybridization} events that are needed to explain $S$ and $T$~\cite{BaroniEtAl2005}. {Furthermore, there exists an algorithm---called {\sc HybridPhylogeny}~\cite{BSS06}---that 
glues 
together 
the elements of $\cF$ by introducing new edges
 such that the resulting graph is a rooted phylogenetic network that explains $S$ and $T$ and has $|\cF|-1$ hybridization vertices}. However, until now, {\sc HybridPhylogeny}, has not found its way into many practical applications that are concerned with reconstructing the evolutionary history for a set of species whose past is likely to include {hybridization}. This might be due to the fact that the reconstructed phylogenetic network is rarely unique because the 
 gluing 
 step can often be done in a number of different ways. Furthermore, given two rooted binary phylogenetic trees $S$ and $T$, a maximum-acyclic-agreement forest for $S$ and $T$ is rarely unique. Given these hurdles, an appealing open problem is the reconstruction of {\it all} rooted phylogenetic networks that explain a pair of rooted phylogenetic trees and whose number of hybridization vertices is minimized. Once having calculated the entire solution space of these networks, one can then for example apply statistical methods or additional biological knowledge to decide which of the phylogenetic network in this space is most likely to be the correct one. 

In this paper, we focus on a first step to reach this goal. In particular, we give the first non-naive algorithm---called {\sc allMAAFs}---that is based on a bounded-search type idea and calculates all maximum-acyclic-agreement forests for two rooted binary phylogenetic trees $S$ and $T$ on the same set of taxa. With the underlying optimization problem being NP-hard~\cite{bordewich} and fixed-parameter tractable~\cite{sempbordfpt2007}, the running time of {\sc allMAAFs} is exponential. {More precisely}, we will see in Section~\ref{sec:RT} that the running time of of this algorithm is $O(3^{14k}+ p(n))$, where 
$n$ is the number of leaves in $S$ and $T$, $p(n)$ is some polynomial function that only depends on $n$, and $k$ is the minimum number of {hybridization} events needed to explain $S$ and $T$.  

The paper is organized as follows. The next section contains preliminaries and some well-known results from the phylogenetics literature. Section~\ref{sec:alg} describes the algorithm {\sc allMAAFs} that calculates all maximum-acyclic-agreement forests for two rooted binary phylogenetic trees. Its pseudocode is also given in this section. Subsequently, in Section~\ref{sec:proof}, we establish the correctness of {\sc allMAAFs} and give its running time in Section~\ref{sec:RT}.  We finish the paper with some concluding remarks in Section~\ref{sec:conclu}. 

\section{Preliminaries} \label{sec:prelim}
In this section, we give some preliminary definitions that are used throughout this paper. Notation and terminology on phylogenetic trees and networks follow~\cite{SempleSteel2003} and~\cite{HusonRuppScornavacca10}, respectively.\\

{\bf Phylogenetic trees.}
A {\it rooted  phylogenetic $\cX$-tree} $T$ is a connected graph with no (undirected) cycle, no vertices of degree 2, except for the root which has degree at least 2, and such that each element of $\cX$ labels a leaf of $T$.  The set $\cX$ represents a collection of present-day taxa and internal vertices  represent putative speciation events. 
A rooted phylogenetic $\cX$-tree $T$ is said to be  {\it binary} if its root vertex has degree two while
all other interior vertices have degree three. We denote the edge set of $T$ by $E(T)$.  The taxa set $\cX$ of $T$ is called the {\it label set} of $T$  and is frequently denoted by $\cL(T)$. 
Furthermore, let $v$ be a vertex of $T$. We denote by $\cL(v)$ the label set of the rooted phylogenetic tree with root $v$ that has been obtained from $T$ by deleting the edge ending in $v$.
Lastly, let $\cF$ be a set of rooted phylogenetic trees. Similarly to  $\cL(T)$, we use $\cL(\cF)$ to denote the union of leaf labels over all elements in $\cF$.

We next introduce several types of subtrees that will play an important role in this paper.
Let $T$ be a rooted phylogenetic $\cX$-tree, and let $\cX'\subset \cX$ be a subset of 
$\cX$.  We use $T(\cX')$ to denote the minimal connected subgraph of $T$ that
contains all leaves that are labeled by elements of $\cX'$.  Furthermore, the 
{\em restriction of $T$ to $\cX'$}, denoted by $T|_{\cX'}$, is defined as 
the rooted phylogenetic tree that has been obtained from $T(\cX')$ by suppressing all non-root degree-2 vertices. 
Lastly, we say that a subtree of $T$ is {\it pendant} if it can be detached from $T$ by deleting a single edge.\\

Now, let $T$ be a rooted binary phylogenetic $\cX$-tree, and let $\cX'$ be a subset of $\cX$. 
Then, the {\em lowest  common ancestor} of $\cX'$ in $T$ is the vertex $v$ in $T$ with $\cX'\subseteq \cL(v)$ such that there exists no vertex  $v'$ in $T$ with $\cX'\subseteq \cL(v')$ and $\cL(v')\subset\cL(v)$. We denote $v$ by $\mrca_T(\cX')$.\\

{\bf Hybridization networks.}
Let $\cX$ be a finite set of taxa. A {\it rooted  phylogenetic network} on $\cX$ is a rooted acyclic digraph with no vertex of both indegree and outdegree one and whose leaves are bijectively labeled by elements of $\cX$. 
Since this paper is concerned with hybridization as a representative of reticulation, we will often refer to a phylogenetic network as a {\it hybridization network}.
Each  internal vertex of a hybridization network with indegree  1 represents a  putative speciation event while each  vertex with indegree of at least $2$ represents a hybridization event and, therefore, a species whose genome is a chimaera of its parents' genomes.
Generically, we call a vertex of the latter type a {\it hybridization vertex} and each edge that enters a hybridization vertex a {\it hybridization edge}.
 
To quantify the number of hybridization events, the {\it hybridization number} of $N$, denoted by $h(N)$,  is defined as
\[h(N) = \sum_{\substack{v\in V(N): \delta^-(v)>0}}(\delta^-(v)-1) = |E| - |V| + 1,\]
where $V(N)$ denotes the vertex set of $N$ and $\delta^-(v)$ the indegree of $v$.
Note that, if $N$ is a rooted  phylogenetic
tree, then $h(N) = 0$, and if $\delta^-(v)$ is at most 2 for each vertex $v\in V(N)$, then $h(N)$ is equal to the total number of hybridization vertices of $N$.

Now, let $N$ be a phylogenetic network on $\cX$, and let $T$ be a rooted binary phylogenetic
$\cX'$-tree with $\cX'\subseteq \cX$. We say that $T$ is {\it displayed} by $N$ if $T$ can be obtained from $N$ by
deleting a subset of its edges and any resulting degree-0 vertices, and then contracting
edges. Intuitively, if $N$ displays $T$, then
all of the ancestral relationships visualized by $T$ are visualized by $N$.
In the remainder of this paper, we will consider the case where $\cT$ is composed of two rooted binary phylogenetic trees.

Extending the
definition of the hybridization number to two rooted binary phylogenetic $\cX$-trees $S$ and
$T$, we set
$$h(S,T ) = \min\{h(N) : N \mbox{ is a hybridization network that displays }S \mbox{ and }T\}.$$
Calculating $h(S,T)$ for two rooted binary phylogenetic $\cX$-trees has been shown to be NP-hard~\cite{bordewich}.\\

\begin{figure}
\begin{tabular}{ccc}
\includegraphics[width = 5cm]{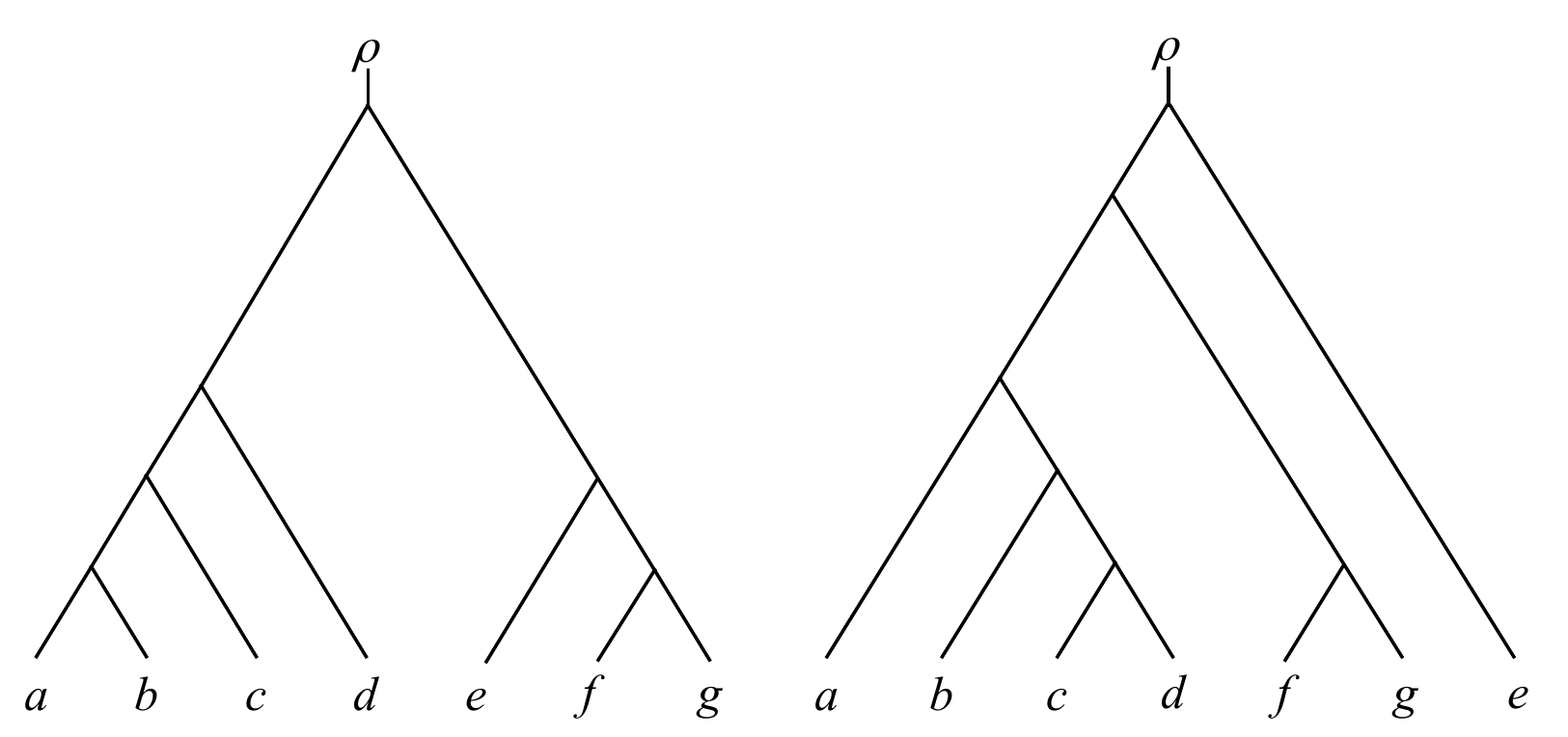}
&
\includegraphics[width = 3.0cm]{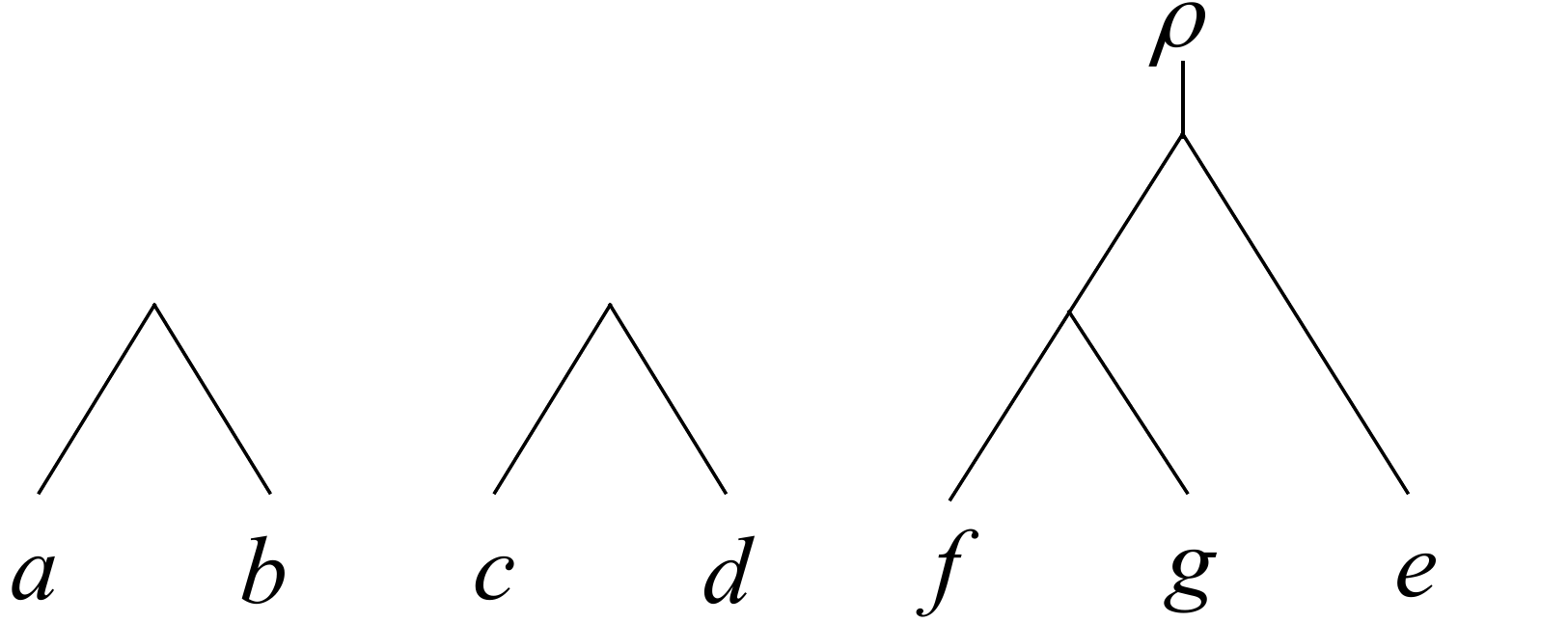}
&
\includegraphics[width = 3.0cm]{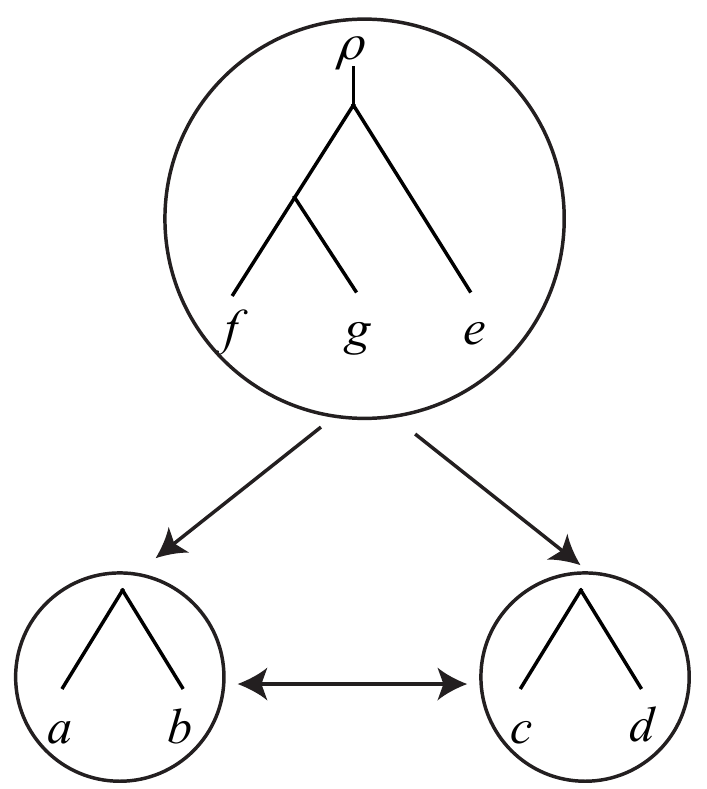}
\\
($i$)   & ($ii$)   & ($iii$)   \\
\end{tabular}
\caption{($i$) Two phylogenetic trees $S$ and $T$ on $\cX=\{a,b,c,d,e,f,g\}$. ($ii$) An agreement forest $\cF$ for $S$ and $T$. ($iii$) The graph $AG(S,T,\cF)$. Since this graph contains a {directed} cycle, $\cF$ is \emph{not} an acyclic-agreement forest for $S$ and $T$.}
\label{fig:forest}
\end{figure}

{\bf Forests.} Let $T$ be a rooted binary phylogenetic $\cX$-tree  whose edge set is $E(T)$. {For the purpose of the upcoming definitions and, indeed, much of the paper, we regard the root of $T$ as a vertex labeled $\rho$ at the end of a pendant edge adjoined to the original root of  $T$. For an example of two such trees, see Figure~\ref{fig:forest}(i).}
Furthermore, we view $\rho$ as an element of the label set of $T$; thus $\cL(T)= \cX\cup\{\rho\}$. Any collection of rooted binary phylogenetic trees whose union of label sets is {$\cL(T)$ is a {\it forest on $\cL(T)$}}.  Furthermore, we  say that a set  $\cF=\{F_0,F_1,\ldots,F_{k}\}$ of rooted binary phylogenetic trees, with $|\cF|$ referred to as the {\it size} of $\cF$, is a {\it forest for} $T$ if $\cF$ can be obtained from $T$ by deleting a $k$-sized subset $E$ of $E(T)$ and, subsequently, suppressing vertices with both indegree and outdegree 1. 
To ease reading, we write $\cF = T-E$ if $\cF$ can be obtained in this way. Obviously, in the same way, we obtain a new forest $\cF'=\{F_0',F_1',\ldots,F'_{k'}\}$ for $T$ from $\cF$ by deleting a $k'$-sized subset $E'$ of $ \bigcup_{F_i \in \cF} E(F_i)$ and, again, suppressing vertices with both indegree and outdegree 1. 
Similarly to the above, we write $\cF' = \cF -E'$ if $\cF'$ can be obtained in this way. Now, let $\cF$ be a forest for a rooted binary phylogenetic $\cX$-tree. We use $\overline{\cF}$ to denote the forest obtained from $\cF$ by deleting all of its
isolated vertices {and, additionally, the element that contains the vertex labeled $\rho$ if it contains at most one edge}.
Lastly, for two leaf vertices $a$ and $c$ with labels $\cL(a)$ and $\cL(c)$ 
respectively, we write $a\sim_\cF c$ if there exists an element in $\cF$ that 
contains a leaf labeled with $\cL(a)$ and a distinct leaf labeled with $\cL(c)$, 
otherwise, we write $a\nsim_\cF c$.\\

\begin{sloppypar}
Let $S$ and $T$ be two rooted binary phylogenetic $\cX$-trees. A set $\cF=\{F_\rho,F_1, F_2, \ldots,F_k\}$ of rooted phylogenetic trees is an {\it agreement forest} for $S$ and $T$ if $\cF$ is a forest for $S$ and $T$, and $\rho\in\cL(F_\rho)$. Note that the beforehand given definition is equivalent to the definition of an agreement forest that is usually used in the literature and that we give next. An {\it agreement forest} $\cF=\{F_\rho,F_1,F_2, \ldots,F_k\}$  for $S$ and $T$ is a collection of trees such that the following properties are satisfied:
\end{sloppypar}
\begin{enumerate}
\item[(i)] The label sets $\cL(F_\rho),\cL(F_1),\cL(F_2), \ldots, \cL(F_k)$ partition $\cX\cup\{\rho\}$ and, in particular, $\rho\in\cL(F_\rho)$.
\item [(ii)] For each $i\in\{\rho,1,2,\ldots,k\}$, we have $F_i\cong S|_{\cL(F_i)}\cong T|_{\cL(F_i)}$.
\item[(iii)] The phylogenetic trees in $\{S(\cL(F_i))\mid i\in\{\rho,1,2,\ldots,k\}\}$
and $\{T(\cL(F_i))\mid i=\{\rho,1,2,\ldots,k\}\}$ are vertex-disjoint subtrees of $S$ and $T$, respectively.
\end{enumerate}
Both definitions of an agreement forests for two rooted binary phylogenetic trees are used interchangeably throughout this paper.

An agreement forest with the minimum cardinality among all agreement forests for $S$ and $T$ is called a \emph{maximum-agreement forest} for $S$ and $T$. {An example of an agreement forest for the two trees $S$ and $T$ presented in Figure~\ref{fig:forest}(i), is shown in (ii) of the same figure. It is easy to check that this  forest is in fact a maximum-agreement forest for $S$ and $T$.}\

A characterization of the hybridization number $h(S,T)$ for two rooted binary phylogenetic trees $S$ and $T$ in terms of agreement forests requires an additional condition. Roughly, this condition avoids that species can inherit genetic material from their own offsprings. Let $\cF=\{F_\rho,F_1,F_2,\ldots,F_k\}$ be an agreement forest for two rooted binary phylogenetic $\cX$-trees $S$ and $T$. Furthermore, let
$AG(S,T,\cF)$ be the directed graph whose vertex set is $\cF$ and for which $(F_i,F_j)$ is an arc precisely if $i\ne j$,
and either
\begin{enumerate}
\item [(1)] the root of $S(\cL(F_i))$ is an ancestor of the root of  $S(\cL(F_j))$  in $S$, or
\item [(2)] the root of $T(\cL(F_i))$ is an ancestor of the root of  $T(\cL(F_j))$  in $T$.
\end{enumerate}
We call  $\cF$ an \emph{acyclic-agreement forest} for $S$ and $T$ if $AG(S,T,\cF)$ does not contain any directed cycle. {To illustrate, Figure~\ref{fig:forest}(iii) shows the graph $AG(S,T,\cF)$ for $S$, $T$, and $\cF$ of the same figure. Note that $\cF$ is not an acyclic-agreement forest for $S$ and $T$.}
Similarly to the definition of a maximum-agreement forest, an acyclic-agreement forest for $S$ and $T$ whose number of components is minimized over all such forests is called a \emph{maximum-acyclic-agreement forest} for $S$ and $T$. The importance of the concept of acyclic-agreement forests lies in the following theorem that has been established in~\cite[Theorem 2]{BaroniEtAl2005} and gives an attractive characterization of the hybridization number for two rooted binary phylogenetic trees.

\begin{theorem}
\label{t:hybrid}
Let $\cF=\{F_\rho,F_1,F_2,\ldots,F_k\}$ be a maximum-acyclic-agreement forest for two rooted binary phylogenetic $\cX$-trees $S$ and $T$. Then $$h(S,T)=k.$$
\end{theorem}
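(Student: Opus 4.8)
The plan is to establish the two inequalities $h(S,T)\le k$ and $h(S,T)\ge k$ separately, the first by an explicit network construction from $\cF$, and the second by extracting an acyclic-agreement forest of controlled size from an optimal network. Throughout, $k+1=|\cF|$ is the minimum number of components over all acyclic-agreement forests for $S$ and $T$.

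For $h(S,T)\le k$, I would reassemble the components of $\cF=\{F_\rho,F_1,\ldots,F_k\}$ into a single hybridization network. Since $AG(S,T,\cF)$ is acyclic, fix a topological order of its vertices; because $\rho\in\cL(F_\rho)$, the component $F_\rho$ is a source. Starting from $F_\rho$ (regarded together with its pendant $\rho$-edge, so that it is a rooted tree), process $F_1,\ldots,F_k$ in this order and, when processing $F_i$, introduce a new vertex $r_i$ as the root of $F_i$ and add two arcs into $r_i$: one emanating from the position in the partially built network corresponding to where the root of $S(\cL(F_i))$ attaches in $S$, and one from the position corresponding to where the root of $T(\cL(F_i))$ attaches in $T$. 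Three things then need to be checked: (a) at each $r_i$ following only the ``$S$-arc'' and afterwards suppressing degree-two vertices recovers $S$, and symmetrically for $T$, so the resulting network $N$ displays both trees; (b) each step adds exactly one vertex of indegree $2$, whence $h(N)=k$; and (c) $N$ is acyclic. Point (c) is exactly where the acyclicity hypothesis enters: a directed cycle in $N$ must pass through at least two of the roots $r_i,r_j$, and the arcs of $N$ traversed along the cycle witness both $(F_i,F_j)$ and $(F_j,F_i)$ in $AG(S,T,\cF)$, a contradiction. This gives $h(S,T)\le h(N)=k$, and is essentially a correctness proof of the gluing procedure {\sc HybridPhylogeny} referred to in the introduction.

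For $h(S,T)\ge k$, let $N$ be a hybridization network displaying $S$ and $T$ with $h(N)=h(S,T)$, and fix embeddings $S'$ and $T'$ of $S$ and $T$ in $N$ (each a subdivision of the respective tree obtained from $N$ by deleting edges and contracting). For each hybridization vertex $v$, the embedding $S'$ uses at most one incoming edge of $v$ and $T'$ uses at most one; mark every incoming edge of $v$ used by neither embedding and, if $S'$ and $T'$ enter $v$ along two distinct edges, one of these two as well (un-marking one edge arbitrarily in the degenerate case that neither embedding meets $v$), so that at most $\delta^-(v)-1$ edges are marked at $v$ and at most $\sum_{v}(\delta^-(v)-1)=h(N)$ edges are marked in total. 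Deleting the marked edges leaves every hybridization vertex with indegree at most $1$, so after discarding unlabeled components and dead ends and suppressing degree-two vertices one obtains a forest $\cF'$ with at most $h(N)+1$ components. It then remains to verify that $\cF'$ is an acyclic-agreement forest for $S$ and $T$: each component restricted to its leaf labels is isomorphic to the corresponding restriction of $S$ and of $T$ (because the deleted edges were chosen compatibly with $S'$ and $T'$), the induced subtrees are vertex-disjoint in $S$ and in $T$, and $AG(S,T,\cF')$ is acyclic because any directed cycle in it would lift, through the embeddings $S'$ and $T'$, to a directed closed walk in the acyclic digraph $N$. Minimality of $\cF$ then gives $k+1\le h(N)+1$, i.e.\ $k\le h(S,T)$.

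The step I expect to be the main obstacle is the edge-selection in the second direction: choosing which incoming edge to delete at each hybridization vertex so that the choice is simultaneously compatible with the embeddings of $S$ and of $T$ while keeping the total count at $h(N)$, and then confirming that the resulting forest genuinely satisfies the vertex-disjointness condition (iii) of an agreement forest. By contrast, the acyclicity of $\cF'$ — though conceptually the heart of the theorem — should follow cleanly once one records how arcs of $AG$ lift to directed paths in $N$, and the first inequality is the comparatively routine construction and bookkeeping sketched above.
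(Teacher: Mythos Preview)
The paper does not give its own proof of this theorem: it is quoted as \cite[Theorem~2]{BaroniEtAl2005}, and only a one-sentence summary of the argument appears in the paragraph immediately following the statement (namely, that deleting the hybridization edges of an optimal network and suppressing degree-$2$ vertices yields a maximum-acyclic-agreement forest). Combined with the paper's earlier reference to {\sc HybridPhylogeny}~\cite{BSS06} as the gluing procedure that turns a maximum-acyclic-agreement forest into a network with $|\cF|-1$ hybridization vertices, this is exactly your two-inequality plan, so your outline agrees with the approach the paper attributes to the original source.

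One remark on the part you flagged as the main obstacle. At a hybridization vertex $v$ where $S'$ and $T'$ enter along two distinct arcs, your rule deletes one of them, say the $S'$-arc. The subtree below $v$ is then still connected upward through the $T'$-arc, so it is not automatically a separate component of the resulting forest; consequently the parenthetical ``because the deleted edges were chosen compatibly with $S'$ and $T'$'' does not by itself give property~(ii), and some further argument is needed to see that each component is isomorphic to the restriction of \emph{both} trees. You are right that this is where the work lies; the original proof handles it with a more careful choice tied to a fixed embedding of one of the two trees, rather than a symmetric marking rule, but since the present paper does not reproduce that argument there is nothing further to compare against here.
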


\noindent In the proof of Theorem~\ref{t:hybrid}, the authors implicitly show that, by deleting all hybridization edges of a  hybridization network $N$ that displays two rooted binary phylogenetic $\cX$-trees $S$ and $T$ and has a minimum number of hybridization vertices and, subsequently, suppressing all non-root degree-2 vertices, one obtains a maximum-acyclic-agreement forest $\cF$ for  $S$ and $T$. Note that $\cF$ is well-defined for when $N$ is given. We say that  $N$ {\it yields} $\cF$.  On the other hand, given a maximum-acyclic-agreement forest $\cF$ for two rooted binary phylogenetic $\cX$-trees $S$ and $T$, and using the algorithm {\sc HybridPhylogeny} (for details, see~\cite{BSS06}) to construct a hybridization network $N$ from $\cF$ that displays $S$ and $T$ and yields $\cF$, $N$ is rarely unique. Nevertheless, if one aims at reconstructing all hybridization networks that display  $S$ and $T$ and whose hybridization number is minimized, one can first calculate all maximum-acyclic-agreement forests for  $S$ and $T$ and then construct all possible minimum hybridization networks for each such forest.  As  mentioned in the introduction, this paper focuses on the first step of this approach, i.e. finding all maximum-acyclic-agreement forests for $S$ and $T$.\\

%

Now, let $\cF$ be a set of rooted binary phylogenetic trees, and let $a$ and $c$ be two distinct leaves of $\cF$. We say that $a$ and $c$ form a {\it cherry} in $\cF$ if they are adjacent to a common vertex, in which case we denote this cherry by $\{a,c\}$. 
Note that $a$ and $c$ refer to leaf vertices and not leaf labels. Let $S$ and $T$ be two rooted binary phylogenetic $\cX$-tree, and let $\cF$ be a forest for $T$. Furthermore, let $\{a,c\}$ be a cherry of $S|_{\cL(\overline{\cF})}$. 
We say that $\{a,c\}$ is a {\it contradicting cherry} of $S$ and $\cF$ if there is no cherry $\{a',c'\}$ in $\overline{\cF}$ such that one of $a'$ or $c'$, say $a'$, is labeled $\cL(a)$ while $c'$ is labeled $\cL(c)$. Otherwise, we call $\{a,c\}$ a \emph{common cherry} of $S$ and $\cF$.\\

\begin{figure}
\begin{center}
\begin{tabular}{c}
\includegraphics[width = 8cm]{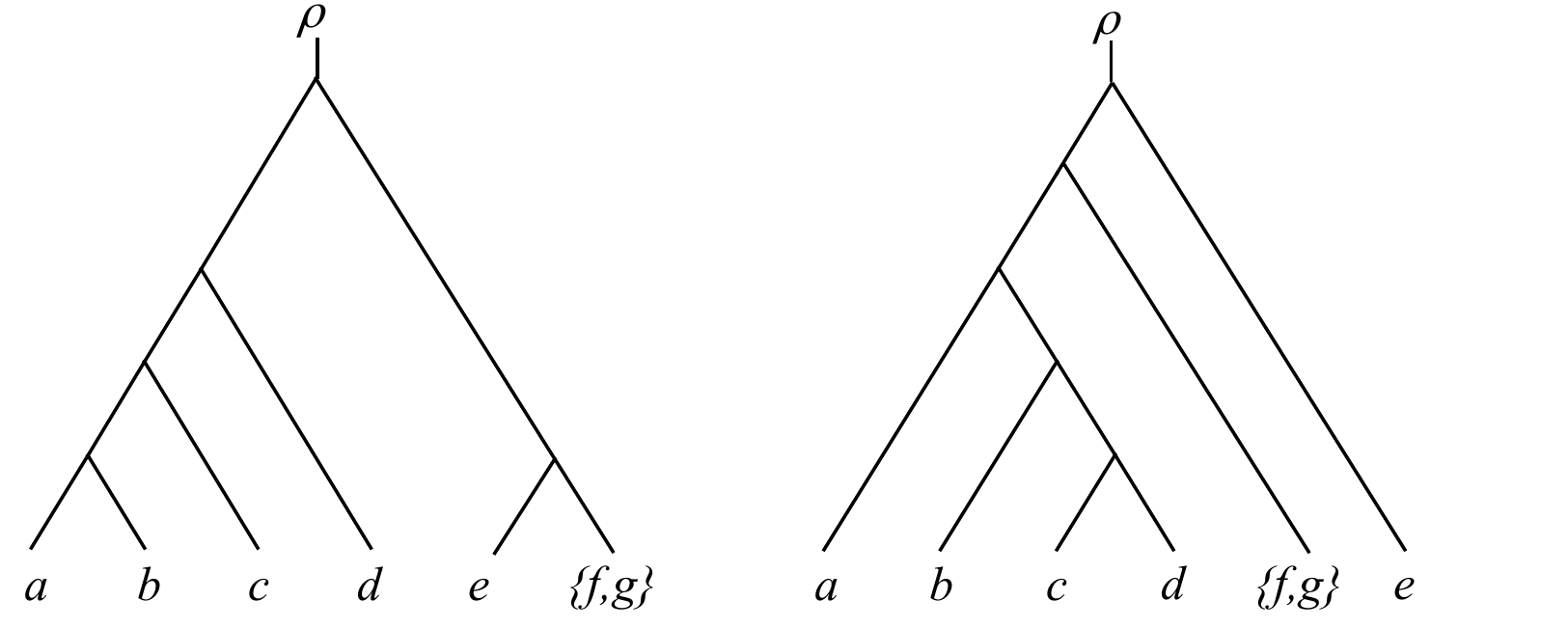}
\\
\end{tabular}
\caption{The two phylogenetic trees obtained by calling \textsc{cherryReduction}($S$, $T$, $\emptyset$, $\{f,g\}$), where $S$ and $T$ are the two phylogenetic trees shown Figure \ref{fig:forest}(i).}
\label{fig:cherry}
\end{center}
\end{figure}

{\bf Cherry reduction.} Let $\cF$ be a forest for a rooted binary phylogenetic tree, and let $\{a,c\}$ be a  cherry of $\cF$. The operation of deleting the two leaf vertices $a$ and $c$ and their respective labels and labeling the resulting new leaf vertex  with $\cL(a)\cup \cL(c)$ is called a {\it cherry reduction}. The new label $\cL(a)\cup \cL(c)$ is sometimes referred to as a {\it dummy taxon}.  We denote this reduction by $\cF[\{\cL(a),\cL(c)\}\rightarrow \cL(a)\cup \cL(c)]$. 
Reversely, we denote by $\cF[\cL(a)\cup \cL(c) \rightarrow \{\cL(a),\cL(c)\}]$  the operation of adjoining the vertex labeled $\cL(a)\cup \cL(c)$ with two new vertices labeled $\cL(a)$ and $\cL(c)$, respectively, via two new edges and deleting the label $\cL(a)\cup \cL(c)$.
{For an example of a cherry reduction, consider the two phylogenetic trees $S$ and $T$ of Figure~\ref{fig:forest}(i) that have a common cherry $\{f,g\}$. Reducing this cherry in $S$ and $T$ results in the two phylogenetic trees that are shown in Figure~\ref{fig:cherry}.}\\

We end this section with an important remark.

\begin{remark}
{The newly created leaf label, that results from applying a cherry reduction to a cherry $\{a,c\}$ that is common to two rooted phylogenetic trees, is the union of the labels associated with the vertices $a$ and $c$. For the rest of this paper, we therefore assume that the forest $\cF$ before applying a cherry reduction and the forest $\cF'$ that results from applying such a reduction have the same label set although the number of leaves has been decreased by one; thus $\cL(\cF)=\cL(\cF')$. Furthermore, we write $l(\cF)$ to denote the number of labeled vertices in $\cF$. Clearly, this number is always one greater than the number of leaves in $\cF$ due to the vertex labeled $\rho$. Lastly, let $S$ be a rooted binary phylogenetic tree. We write 
$l(S)=l(\cF)$ 
if the number of labeled vertices in $S$ and $\cF$ is identical and if there is a bijection between the vertex labels of $S$ and $\cF$.}
\end{remark}

\section{The algorithm {\sc allMAAFs}}\label{sec:alg}
In this section, we first give a brief outline of the algorithm {\sc allMAAFs} that calculates all maximum-acyclic-agreement forests for two rooted binary phylogenetic trees and, subsequently, present its pseudocode. Before doing so, we however start with an important remark to emphasize how the algorithm presented in this section separates itself from previously published work, and {give} some additional definitions.\\

\begin{remark}
 While  {\sc allMAAFs} has a similar flavor as an algorithm presented {in~\cite{whiddenWABI} that has been further improved in~\cite{whidden2010fast}, we remark here that our algorithm contains significant modifications due to a problem in both papers. In particular, Whidden et al.'s algorithms} are based on a different definition of an acyclic-agreement forest $\cF$ for  two rooted binary phylogenetic $\cX$-trees $S$ and $T$ compared to the definition that we have given in Section~\ref{sec:prelim}. Translated into the language of this paper, they define $\cF$ to be acyclic precisely if $AG(S,T,\cF)$ does not contain a directed cycle of length 2. Of course, this does not eliminate the possibility of cyclic inheritance in general although this is essentially required from a biological point of view. While {\sc allMAAFs} considers this stronger constraint and calculates a maximum-acyclic-agreement forest as defined in Section~\ref{sec:prelim}, we additionally show that our algorithm also computes all such forests (see Section~\ref{sec:proof}).
 \end{remark}

Let $S$ be a rooted binary phylogenetic $\cX$-tree, and let $\cF$ be a forest {such that $l(S)=l(\overline{\cF})$. Let $\{a,c\}$ be a cherry of {$S|_{\cL(\overline{\cF})}$}. We denote by $e_a$  the edge of $\cF$ that is incident with the leaf vertex, say $a'$, labeled $\cL(a)$, and by $e_c$  the edge of $\cF$ that is incident with the leaf vertex, say $c'$, labeled $\cL(c)$. 
Furthermore, if  $\{a,c\}$ is a 
contradicting cherry of $S$ and $\cF$ and $a \sim_{\cF} c$, let $F_i$ be the unique element of $\cF$ such that {$\cL(a)\subset\cL(F_i)$ and $\cL(c)\subset\cL(F_i)$}.
 Let $a',v_1,v_2,\ldots,v_n,c'$ be the path of vertices from $a'$ to $c'$ in $F_i$. We define $e_B=\{u,v\}$ to be an edge of $F_i$ such that $u\in\{v_1,v_2,\ldots,v_n\}$, $v\notin\{a',v_1,v_2,\ldots,v_n,c'\}$, and $u$ is an ancestor of $v$ in $F_i$. An example of an edge $e_B$ is shown in Figure~\ref{fig:alg}(i), where $e_1$ is such an edge for the contradicting cherry $\{a,b\}$ of the two topmost phylogenetic trees of that figure.} 
 Now, an edge $e$ of $\cF$ is said to be \emph{associated} with  a contradicting or common cherry $\{a,c\}$ for $S$ and $\cF$ if one of the following holds:
 \begin{enumerate}
 \item $e\in \{e_a, e_c\}$ if $\{a,c\}$ is a common cherry of $S$ and $\cF$, or $\{a,c\}$ is a contradicting cherry of $S$ and $\cF$ and $a \nsim_{\cF} c$, 
 \item  $e\in \{e_a, e_B,e_c\}$ if $\{a,c\}$ is a contradicting cherry of $S$ and $\cF$ and $a \sim_{\cF} c$.
 \end{enumerate}

We next describe the pseudocode of {\sc allMAAFs}. The
algorithm takes as input two rooted binary phylogenetic $\cX$-trees $S$ and $T$, a rooted binary phylogenetic tree $R$ and a forest $\cF$ such that $l(R)=l(\overline{\cF})$ and $\cL(T)=\cL(\cF)$, an integer $k$, and a list $M$ that contains information of previously reduced cherries. The output of {\sc allMAAFs} is a set $\boldsymbol{\cF}$ of forests for $\cF$ and an integer $k$. We will see in Section~\ref{sec:proof}, that if the input to {\sc allMAAFs} are two rooted binary phylogenetic $\cX$-trees $S$ and $T$, $R=S$, $\cF=T$, and $M=\emptyset$, then $\boldsymbol{\cF}$ precisely contains all maximum-acyclic-agreement forests for $S$ and $T$ and their respective hybridization number if and only if $k\geq h(S,T)$. We will therefore assume for the rest of the description of the pseudocode that {\sc allMAAFs}$(S,T,R, \cF,k,\emptyset)$ has initially been called for $R=S$, $\cF=T$, and $M=\emptyset$. 
If $k<0$, the algorithm immediately stops and returns an empty set. 
If, on the other hand, $k\geq 0$ and $l(R)=0$, 
then a forest $\cF'$ is obtained  from $\cF$ by calling {\sc cherryExpansion}$(\cF,M)$; that is undoing all previously performed cherry reduction. {As we will soon see in Lemma~\ref{l:af}}, $\cF'$ is an agreement forest for $S$ and $T$. Thus, if the graph $AG(S,T,\cF')$ is acyclic, then $\cF'$ is an acyclic-agreement-forest for $S$ and $T$, and the algorithm returns $\cF'$ and $|\cF'|-1$ with the latter being the hybridization number for $S$ and $T$ if $\cF$ is of smallest size.

Otherwise, if $k\geq 0$ and $l(R)>0$, the algorithm proceeds in a bounded-search type fashion by recursively deleting an edge in $\cF$ or reducing a {common cherry by calling {\sc cherryReduction}} until the resulting forest is a forest for $S$ and $T$. More precisely, each recursion starts by picking a cherry in $R$. Since $l(R)>0$, a cherry, say $\{a,c\}$, always exists since, by definition of $\overline{\cF}$, we have $l(R)\geq 2$. Depending on whether $\{a,c\}$ is a contradicting or common cherry of $R$ and $\cF$, and on whether or not $a$ and $c$ are vertices of the same component in $\cF$, the algorithm branches into at most three computational paths by recursively calling {\sc allMAAFs}. Note that the number of edge deletions that can additionally be performed at each step of the algorithm is given by the fifth parameter of each call to {\sc allMAAFs}. In the following, we say that a computational path {\it corresponds to deleting an edge in $\cF$} if  {\sc allMAAFs} is recursively called for a forest, say $\cF'$, that has been obtained from deleting an edge, and $R|_{\cL(\overline{\cF'})}$. Similarly, we say that a computational path {\it corresponds to calling {\sc cherryReduction}} if {\sc allMAAFs} is recursively called for a tree and a forest that are returned from a call to {\sc cherryReduction}.

Now, regardless of whether $\{a,c\}$ is a contradicting or common cherry of  $R$ and $\cF$, {\sc allMAAFs} branches into two new computational paths that correspond to deleting $e_a$ and $e_c$ in $\cF$, respectively. Additionally, if $\{a,c\}$ is a contradicting cherry of $R$ and $\cF$ and $a\sim_{\cF} c$, then {\sc allMAAFs} branches into a third computational path that corresponds to deleting an edge $e_B$ in $\cF$. Similarly, if $\{a,c\}$ is a common cherry of $R$ and $\cF$, then {\sc allMAAFs} branches into a third path that corresponds to calling {\sc cherryReduction}$(R,\cF,M,\{a,c\})$. {Intuitively, if $\{a,c\}$ is a contradicting cherry of $R$ and $\cF$, then, to obtain an agreement forest for the inputted trees $S$ and $T$, one needs to delete at least one of $e_a$, $e_c$ and $e_B$. Otherwise, if $\{a,c\}$ is a common cherry of $R$ and $\cF$, then, to obtain an acyclic-agreement forest, say $\cF'$ for $S$ and $T$, either the labels of $a$ and $c$ label vertices of the same component in $\cF'$, which is mimicked by calling {\sc cherryReduction} for $a$ and $c$, or the labels of  $a$ and $c$ are contained in the label sets of two distinct elements in $\cF'$; thus one needs to delete one of $e_a$ or $e_c$. Noting that a common cherry of $R$ and $\cF$ is not necessarily a common cherry of $S$ and $T$, we remark that this part of the algorithm has a  similar flavor as~\cite[Lemma 3.1.2]{sempbordfpt2007}, where the authors consider so-called common chains of $S$ and $T$ with at least 3 leaves.}
To keep track of the set $\boldsymbol{\cF}$ of maximum-acyclic-agreement forests for the two initially inputted rooted binary phylogenetic $\cX$-trees $S$ and $T$, the algorithm uses the variable $k_{\min}$ which is set to be the minimum number of edges that have so far been deleted in $\cF$ so that the resulting forest is an acyclic-agreement forest for $R$ and $\cF$, and updates $\boldsymbol{\cF}$ and $k_{\min}$ as appropriate throughout a run of {\sc allMAAFs}.

We end the description of the pseudocode by noting that {\sc allMAAFs} always terminates because, at each recursive call, either $k$ is decreased by one or the number of leaves in $R$ is decreased by one due to calling {\sc cherryReduction}.

\begin{small}
\begin{algorithm}[]
\KwData{{A rooted binary phylogenetic tree $R$ and a forest $\cF$ such that $l(R)=l(\overline{\cF})$, a list $M$ that contains all information of previously applied cherry reductions, and a common cherry $\{a,c\}$ of $R$ and $\cF$.}}
\KwResult{A rooted binary phylogenetic tree $R'$ and a forest $\cF'$ obtained from $R$ and $\cF$, respectively by replacing $\{a,c\}$ with  a single leaf with a new label $\cL(a)\cup \cL(c)$,  and an updated list $M'$.}

$M'\gets \mbox{Add } \{\cL(a), \cL(c)\}\mbox{ as last element of } M$\;
 $R' \gets R[\{\cL(a),\cL(c)\}\rightarrow \cL(a)\cup \cL(c)]$\;
$\cF'\gets\cF[\{\cL(a),\cL(c)\}\rightarrow \cL(a)\cup \cL(c) ]$\;

\Return $(R',\cF',M')$
\caption{ \textsc{cherryReduction}$(R,\cF,M,\{a,c\})$\label{a:subRed}} 
\end{algorithm}
\end{small}

\begin{small}
\begin{algorithm}[]
\KwData{A  forest $\cF$ and a list $M$ containing information of all previously applied cherry reductions.}
\KwResult{A forest $\cF$ whose vertices labeled with dummy taxa have been replaced by the corresponding cherries using the information contained in $M$.}
\While{$M$ {\em is not empty}}{ 
         $M\gets \mbox{remove the last element, say $\{\cL(a), \cL(c)\}$, from } M$\;
	$\cF\gets \cF[\cL(a)\cup \cL(c) \rightarrow \{\cL(a),\cL(c)\}]$\;
}
\Return $\cF$
\caption{ \textsc{cherryExpansion}$(\cF,M)$\label{a:subExp}} 
\end{algorithm}
\end{small}

\begin{small}
\begin{algorithm}[]
\KwData{{Two rooted binary phylogenetic $\cX$-trees $S$ and $T$, a rooted binary phylogenetic tree $R$ and a forest $\cF$ such that $l(R)=l(\overline{\cF})$ and $\cL(T)=\cL(\cF)$, an integer $k$, and a list $M$ that contains information of previously reduced cherries.}
}
\KwResult{A set $\boldsymbol{\cF}$ of forests for $\cF$ and an integer. In particular, if $\cF=T$, $R=S$, $M=\emptyset$, and $k\geq h(S,T)$ is the input to {\sc allMAAFs}, the output precisely consists of all maximum-acyclic-agreement forests for $S$ and $T$ and their respective hybridization number.}

     \If{$k<0$}{\Return $(\emptyset,k-1)$\;}
     \If{$|\cL(R)|=0$}{
     		    $\cF' \gets$ \textsc{cherryExpansion}($\cF$, $M$)\;
		    \If {$AG(S,T,\cF')$ \textnormal{is acyclic}}{
       			 \Return ($\cF'$, $|\cF'|-1$)\;
			 }
			 \Else{\Return $(\emptyset,k-1)$\;}
	}
	\Else{
		let $\{a,c\}$ be a  cherry of $R$\;
		 ($\boldsymbol{\cF_a}$, $k_a$) $\gets$ \textsc{allMAAFs}($S$, $T$, $R|_{\cL(\overline{\cF-\{e_a\}})}$, $\cF-\{e_a\}$, $k-1$, $M$)\;
				 \If{$\boldsymbol{\cF_a}\neq \emptyset$}{$k' \gets \min(k, k_a+1 - |\cF|)$\;}
				 ($\boldsymbol{\cF_c}$, $k_c$) $\gets$ \textsc{allMAAFs}($S$, $T$, $R|_{\cL(\overline{\cF-\{e_c\}})}$, $\cF-\{e_c\}$, $k'-1$, $M$)\;
				  \If{$\boldsymbol{\cF_c}\neq \emptyset$}{$k' \gets \min(k',k_c+1 - |\cF|)$\;}
				 $\boldsymbol{\cF} \gets \emptyset$\;
				 
		\If{$\{a,c\}$ \textnormal{ is a contradicting cherry of } $R$ \textnormal{ and } $\cF$}{
			\If{$a \nsim_{\cF} c$ }{
				 
				 $k_{\min} \gets k'$\;
				 \lIf{$(k_a +1- |\cF|= k_{\min})$}{
				 	$\boldsymbol{\cF} =  \boldsymbol{\cF_a}$\;
				}
				\lIf{$(k_c +1- |\cF|= k_{\min})$}{
				 	$\boldsymbol{\cF} = \boldsymbol{\cF} \cup \boldsymbol{\cF_c}$\;
				}
				 \Return ($\boldsymbol{\cF}$, $k_{\min}-1$)\;
			}
			 \Else{
				  	($\boldsymbol{\cF_B}$, $k_B$) $\gets$ \textsc{allMAAFs}($S$, $T$, $R|_{\cL(\overline{\cF-\{e_B\}})}$, $\cF-\{e_B\}$, $k'-1$, $M$)\;
				  \If{$\boldsymbol{\cF_B}\neq \emptyset$}{$k_{\min} \gets \min(k', k_B+1 - |\cF| )$\;} 
				 \lIf{$(k_a +1 - |\cF| = k_{\min})$}{
				 	$\boldsymbol{\cF} =  \boldsymbol{\cF_a}$\;
				}
				 \lIf{$(k_B +1 - |\cF| = k_{\min})$}{
				 	$\boldsymbol{\cF} = \boldsymbol{\cF} \cup \boldsymbol{\cF_B}$\;
				}
				\lIf{$(k_c +1 - |\cF| = k_{\min})$}{
				 	$\boldsymbol{\cF} = \boldsymbol{\cF} \cup \boldsymbol{\cF_c}$\;
				}
				 \Return ($\boldsymbol{\cF}$, $k_{\min}-1$)
			 }
		}
	\Else{
				 $(R',\cF',M') \gets$ \textsc{cherryReduction}($R$, $\cF$, $M$, $\{a,c\}$)\;
				  ($\boldsymbol{\cF_r}$, $k_r$) $\gets$ \textsc{allMAAFs}($S$, $T$, $R'|_{\cL(\overline{\cF'})}$, $\cF'$, $k'$, $M'$)\;
				   \If{$\boldsymbol{\cF_r}\neq \emptyset$}{$k_{\min} \gets \min(k', k_r +1 - |\cF|)$\;}

				 \lIf{$(k_a +1 - |\cF| = k_{\min})$}{
				 	$\boldsymbol{\cF} =  \boldsymbol{\cF_a}$\;
				}
				\lIf{$(k_c +1 - |\cF| = k_{\min})$}{
				 	$\boldsymbol{\cF} = \boldsymbol{\cF} \cup \boldsymbol{\cF_c}$\;
				}
				\lIf{$(k_r +1 - |\cF| = k_{\min})$}{
				 	$\boldsymbol{\cF} = \boldsymbol{\cF} \cup \boldsymbol{\cF_r}$\;
				}
				 \Return ($\boldsymbol{\cF}$, $k_{\min}-1$)\;
	}
	}
\caption{ \textsc{allMAAFs}($S$, $T$, $R$, $\cF$, $k$, $M$)\label{a:recForests}} 
\end{algorithm}
\end{small}

\begin{figure}
\begin{center}
\begin{tabular}{c}
\includegraphics[width = 7cm]{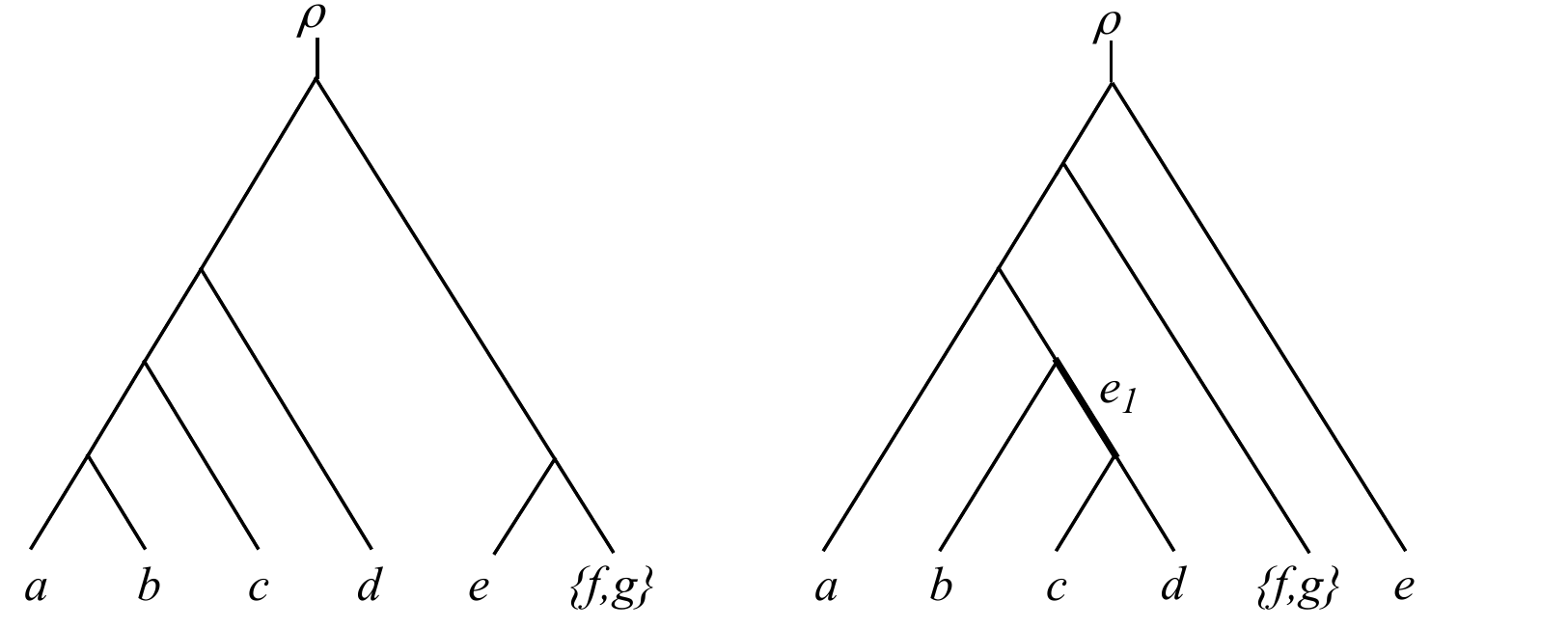}\\
($i$)\\
\includegraphics[width = 7cm]{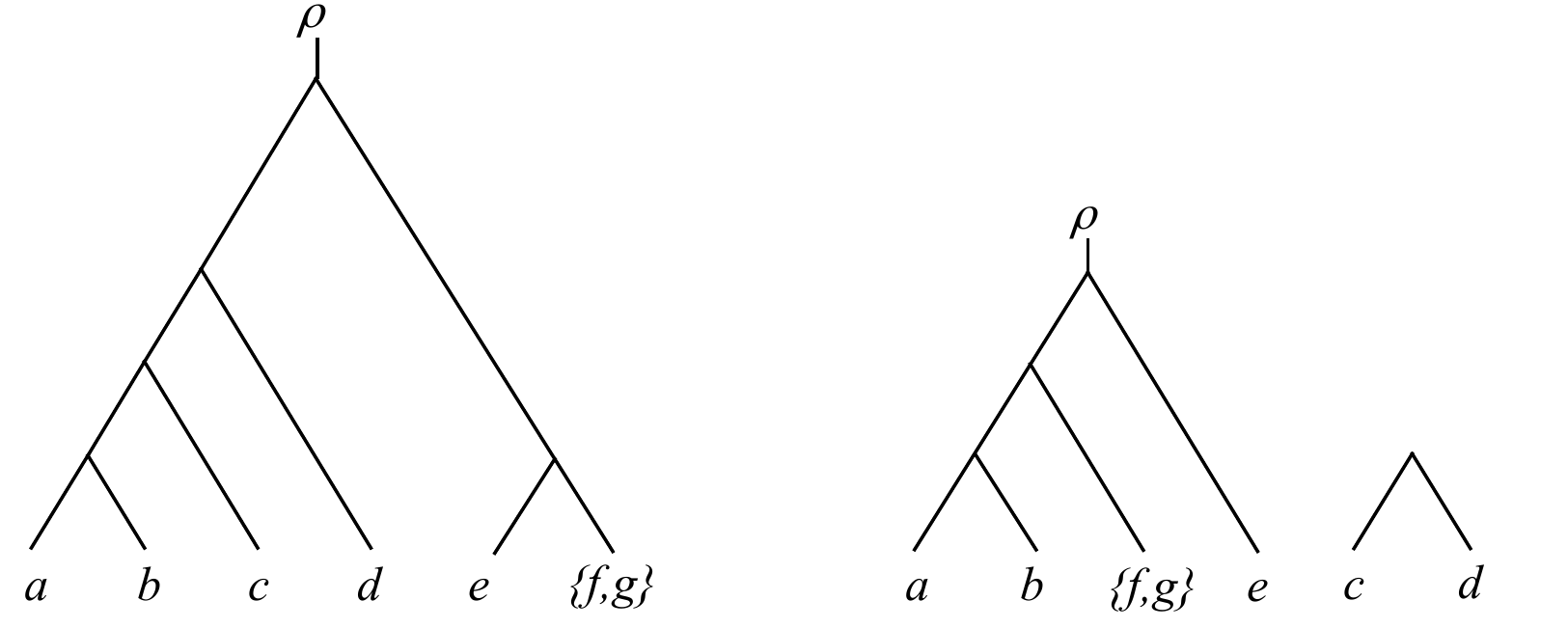}\\
($ii$)\\
\includegraphics[width = 7cm]{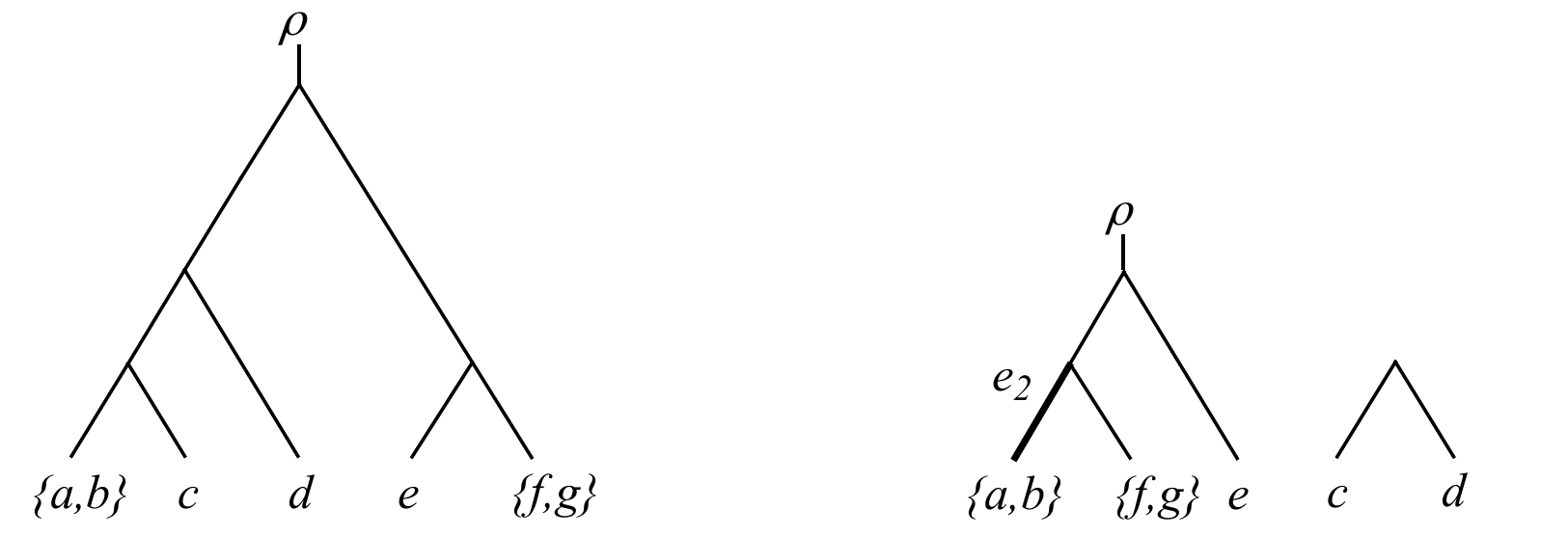}\\
($iii$)\\
\includegraphics[width = 7cm]{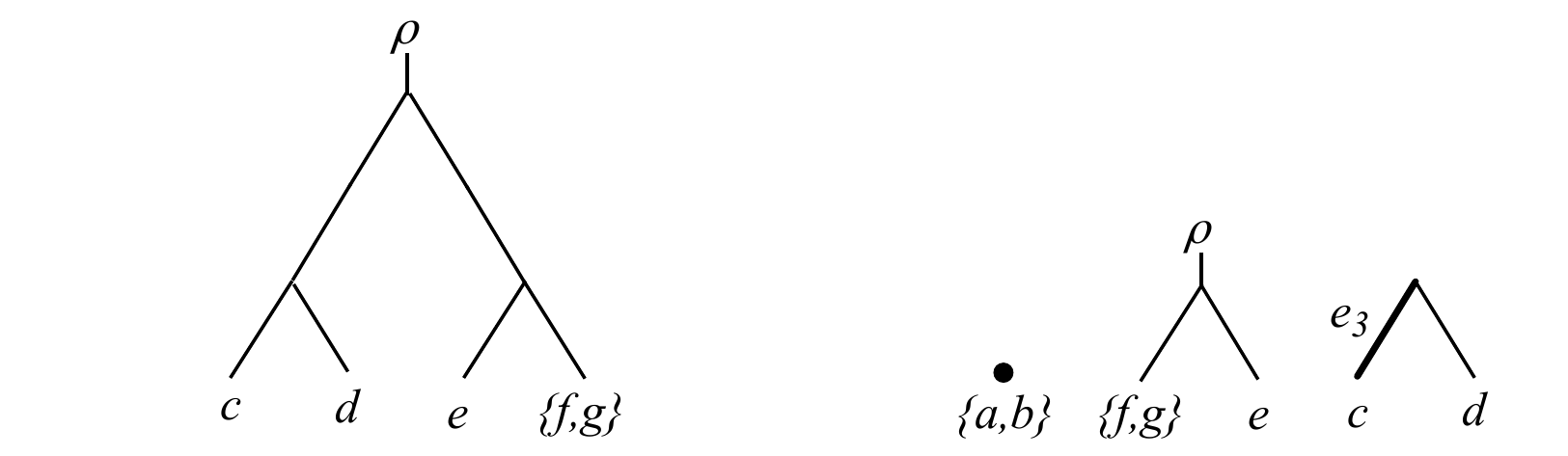}\\
($iv$)\\
\includegraphics[width = 7cm]{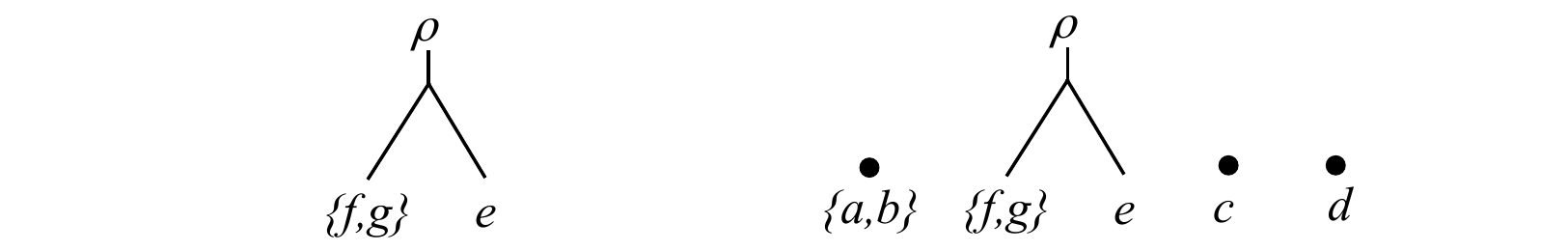}\\
($v$)\\
\includegraphics[width = 7cm]{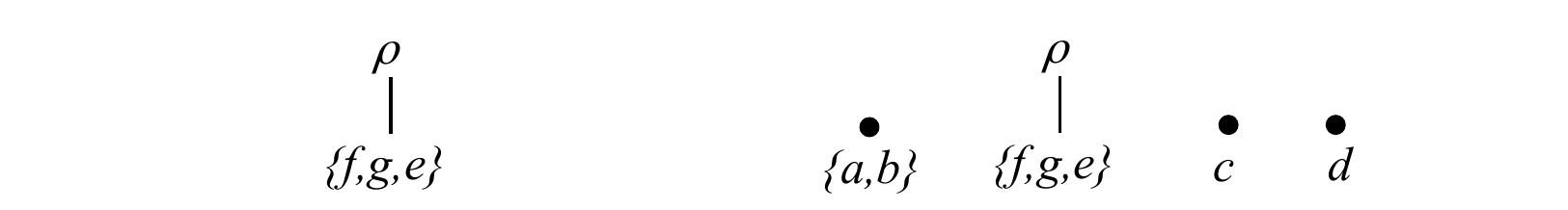}\\
($vi$)\\
\includegraphics[width = 3.5cm]{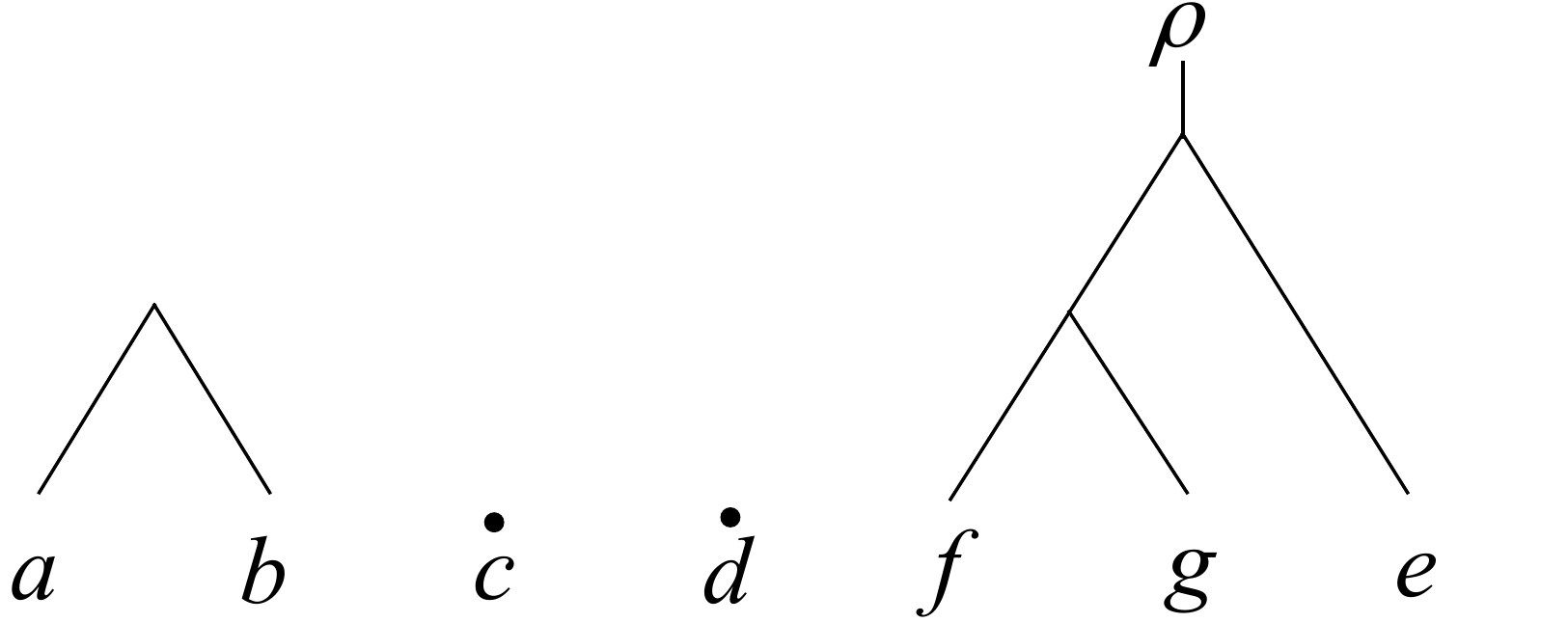}\\
($vii$)\\
\end{tabular}
\caption{An example of a call to {\sc processCherries}$(S, T,\fc_1,\fc_2,\ldots,\fc_6)$, where $S$ and $T$ are the phylogenetic trees of Figure \ref{fig:forest}(i) and the cherry list is $((\{f,g\}, \emptyset)$, 
$(\{a,b\}, e_1)$, 
$(\{a,b\}, \emptyset )$, 
$(\{\{a,b\},c\}, e_2 )$, 
$(\{c,d\}, e_3 )$, 
$(\{\{f,g\},e\}, \emptyset))$. In (i)-(vi), the phylogenetic trees and the forests are shown that are obtained by {successively} analyzing each cherry action of the above list while in (vi) 
the result of the call
{\sc cherryExpansion($\cF$, $M$)} is shown, where $\cF$ is the  forest that is depicted in (vi) and $M$ contains all the information of previously applied cherry reductions. Note that the forest in (vii) is a maximum-acyclic-agreement forest for $S$ and $T$. 
}
\label{fig:alg}
\end{center}

\end{figure}
\section{Correctness of the algorithm {\sc allMAAFs}\label{sec:proof}}
In this section, we prove the main result of this paper. In particular, we show that the algorithm {\sc allMAAFs} calculates all maximum-acyclic-agreement forests for two rooted binary phylogenetic trees $S$ and $T$ for when inputted with $R=S$, $\cF=T$, {$M=\emptyset$}, and $k \geq h(S,T)$. We start with some additional definitions that are crucial for what follows. \\


Let $\cF$ and $\cG$ be two forests such that $\cL(\cF)=\cL(\cG)$. We call $\cG$ a \emph{super-forest} of $\cF$ if and only if the following two conditions are satisfied:
\begin{enumerate}
\item [(1)]for each $G_j \in \cG$, there exists a subset $\cF'$ of $\cF$ such that $\cL(\cF')=\cL(G_j)$, and
\item [(2)] for each leaf vertex $a$ in an element of $\cG$, there exists a component $F_i$ in $\cF$ such that $\cL(F_i)\supseteq \cL(a)$.
\end{enumerate}
\noindent {For an example of two forests that are no super-forests of the forest that is shown in Figure~\ref{fig:forest}(ii), see Figure~\ref{fig:super}.}

\begin{figure}[t]
\begin{center}
\begin{tabular}{cc}
\includegraphics[width = 5cm]{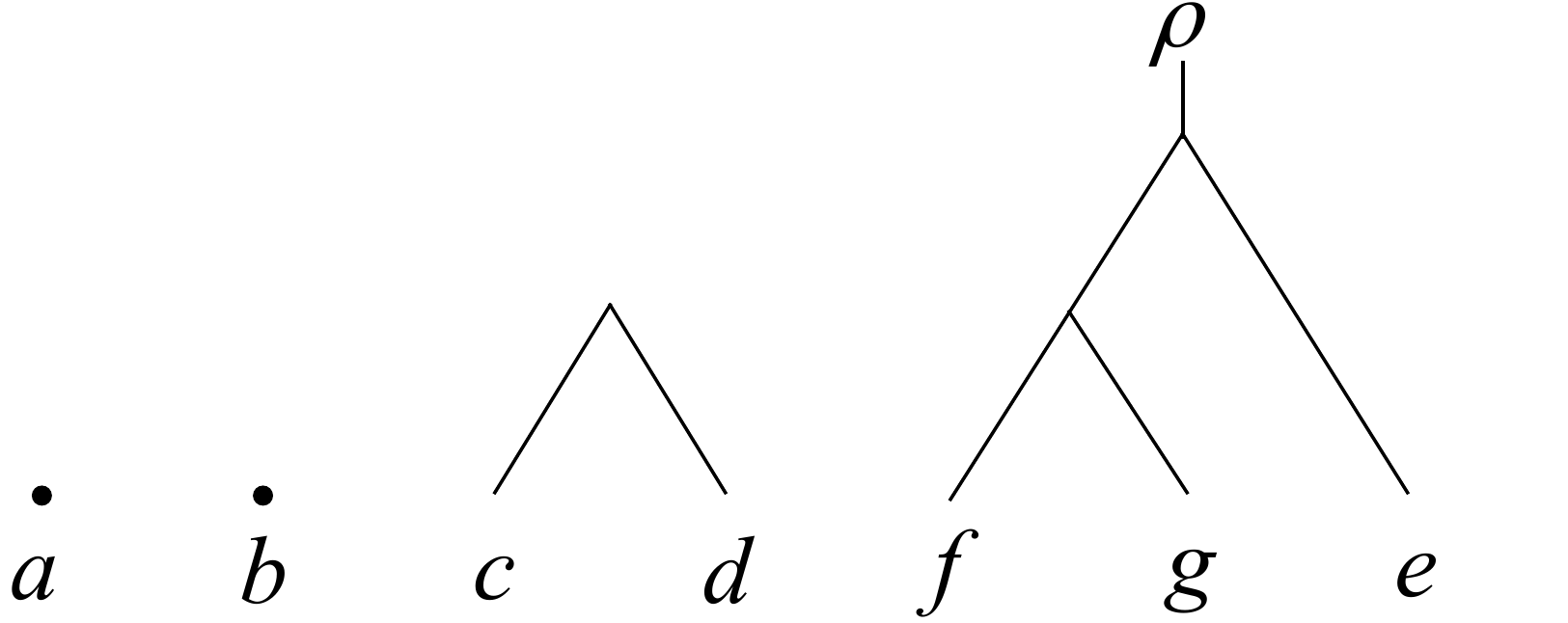}
&
\includegraphics[width = 4.0cm]{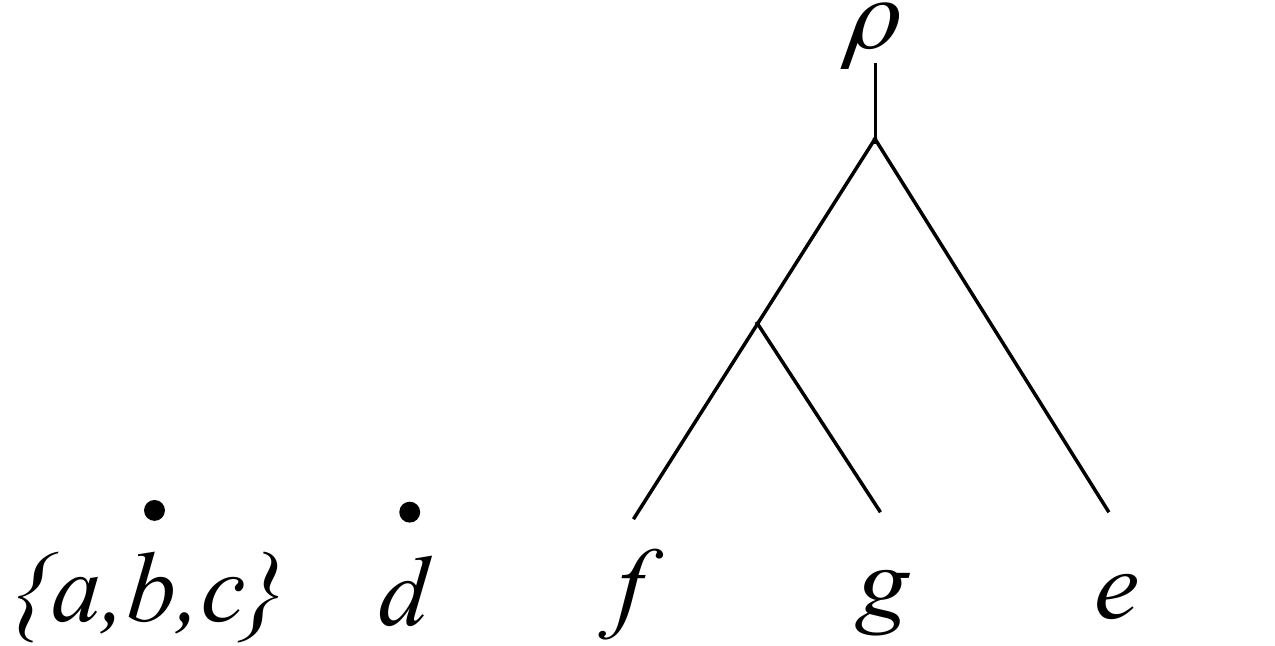}
\\
\end{tabular}
\caption{Two non-super-forests of the forest $\cF$ in Figure \ref{fig:forest}($ii$). The forest on the left-hand side is not a super-forest of $\cF$ because there exists no subset $\cF'$ of $\cF$ such that $\cL(\cF')=\{a\}$.  The forest on the right-hand side  is not a super-forest of $\cF$ because there exists no component $F_i$ in $\cF$ such that $ \cL(F_i) \supseteq \{a,b,c\}$. }
\label{fig:super}
\end{center}
\end{figure}

The next observation is an immediate consequence of the previous definition.

\begin{observation}\label{obs:baseCase}
Given an acyclic-agreement forest $\cF$ for two rooted binary phylogenetic $\cX$-trees $S$ and $T$, then {$S$ and $T$} are both super-forests for $\cF$. 
\end{observation}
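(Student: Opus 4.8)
The plan is to verify directly that $S$ (and, by symmetry, $T$) satisfies the two defining conditions of a super-forest of $\cF$, using the characterization of an agreement forest via properties (i)--(iii) from Section~\ref{sec:prelim}. We regard $S$ as a one-element forest $\{S\}$ on the label set $\cL(S) = \cX \cup \{\rho\}$, and recall that since $\cF$ is an agreement forest for $S$ and $T$, the label sets $\cL(F_\rho), \cL(F_1), \ldots, \cL(F_k)$ partition $\cX \cup \{\rho\}$; in particular $\cL(\cF) = \cX \cup \{\rho\} = \cL(S)$, so the hypothesis $\cL(\cF) = \cL(\{S\})$ needed to even speak of a super-forest is met.

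For condition (1), the forest $\cG = \{S\}$ has the single component $S$ with $\cL(S) = \cX \cup \{\rho\}$. Taking $\cF' = \cF$ itself as the required subset of $\cF$, we have $\cL(\cF') = \bigcup_{i} \cL(F_i) = \cX \cup \{\rho\} = \cL(S)$ by the partition property, so condition (1) holds. For condition (2), let $a$ be any leaf vertex of $S$. Then $\cL(a)$ is a single taxon (or $\{\rho\}$), hence an element of $\cX \cup \{\rho\}$, and since the $\cL(F_i)$ partition $\cX \cup \{\rho\}$ there is exactly one component $F_i \in \cF$ with $\cL(a) \in \cL(F_i)$, i.e.\ $\cL(F_i) \supseteq \cL(a)$. (Here I am using that in the context of this observation $\cF$ is an honest agreement forest whose leaves carry singleton labels, so no cherry reductions have yet introduced dummy taxa; if one wanted to allow that, $\cL(a)$ would be a block of the partition and the same argument applies verbatim.) Thus $S$ is a super-forest of $\cF$, and the identical argument with $T$ in place of $S$ finishes the proof.

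I do not expect any genuine obstacle here: the statement is essentially an unpacking of definitions, and the only thing to be a little careful about is matching the convention for leaf labels (single taxa versus dummy-taxon blocks) between the definition of super-forest and the definition of agreement forest, together with the bookkeeping convention $\cL(S) = \cX \cup \{\rho\}$ introduced in the ``Forests'' paragraph. Once those conventions are lined up, both conditions are immediate from the partition property~(i) of an agreement forest, and acyclicity of $AG(S,T,\cF)$ plays no role at all --- the observation would hold for any agreement forest, acyclic or not.
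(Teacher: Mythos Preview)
Your proposal is correct and matches the paper's approach: the paper simply declares the observation to be ``an immediate consequence of the previous definition'' without spelling anything out, and your argument is exactly the direct unpacking of that definition one would write. Your remark that acyclicity plays no role is also accurate.
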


Let $R$ be a rooted binary phylogenetic tree and let $\cF$ be a forest such that $l(R)=l(\overline{\cF})$. Furthermore, let $\{a,c\}$ be a cherry of $R$. In the following, we say that a pair $\fc=( \{a,c\}, e)$ is a \emph{cherry action} if one of the following conditions is satisfied:
\begin{enumerate}
\item [(1)] $\{a,c\}$ is a contradicting or common cherry of $R$ and $\cF$ and $e$ is an edge associated with $\{a,c\}$, or
\item [(2)] $\{a,c\}$ is a common cherry of $R$ and $\cF$ and $e=\emptyset$.
\end{enumerate}
Finally, we say that $\fcL=(\fc_1,\fc_2, \ldots \fc_l)$ is a {\em cherry list} for $R$ and $\cF$ if and only if each $\fc_i$ is a cherry action in iteration $i$ of the following algorithm; i.e. {\sc processCherries} does not return {\it false}:

\begin{tabbing}
{\sc processCherries}($R,\cF,(\fc_1,\fc_2,\ldots,\fc_l)$)\\
\quad~ $M\gets \emptyset$; \\
\quad~ \={\bf for} \= {\bf each} $i=1,\ldots, l$\\
 \>   \>  $(\{a,c\},e_i ) \gets \fc_i$;\\
 \>  \> {\bf if} $\{a,c\}$ {is a common  cherry} of $R$ and $\cF$ and $e_i=\emptyset$ \\
 \> \> \quad $(R,\cF,M) \gets$ \textsc{cherryReduction}($R$, $\cF$, $M$, $\{a,c\}$);\\
 \>  \>{\bf else if } $\{a,c\}$ is a {common or contradicting cherry} of $R$ and $\cF$ and $e_i$ is associated with $\{a,c\}$\\
\> \> \quad $\cF \gets \cF - \{e_i\}$;\\
\> \> \quad $R \gets R|_{\cL(\overline{\cF})}$;\\
 \>  \> {\bf else}\\
 \> \> \quad \Return({\it false});\\
 \quad~ $\cF\gets$ {\sc cherryExpansion}$(\cF,M)$;\\
\quad~ \Return $(R,\cF,M)$;
\end{tabbing}

\begin{remark}
The algorithm {\sc processCherries} is mimicking a computational path of the algorithm {\sc allMAAFs} for when the former algorithm is given a cherry list for $R$ and $\cF$. A specific example of a call to {\sc processCherries} is shown in Figure~\ref{fig:alg} with a detailed description given in the caption of this figure.
\end{remark}

\begin{sloppypar}
In what follows, we will sometimes make use of the algorithm {\sc processCherries}$(R,\cF,\fcL)$, but without executing the call to {\sc cherryExpansion} in the second-to-last line of this algorithm.
We refer to this slightly different algorithm as {\sc processCherries}*$(R,\cF,\fcL)$ and to the returned forest as a {\it reduced forest}. 
Now, let $\cF'$ be the forest obtained from calling {\sc processCherries}$(R,\cF,\fcL)$, and let $\cF''$ be the forest obtained from calling {\sc processCherries*}$(R,\cF,\fcL)$. We say that $\cF'$ is the {\it underlying forest for $\cF''$} and observe that $|\cF'|=|\cF''|$. 

We continue with two important remarks.
\end{sloppypar}

\begin{remark}\label{rem:onlyOneLeafF}
Applying {\sc processCherries}*$(R,\cF,\fcL)$, returns a tree $R$ that does not contain any vertex if and only if, prior to calling {\sc cherryExpansion}$(\cF,M)$, the forest $\cF$ only consists of isolated vertices and possibly an element that precisely contains a vertex labeled $\rho$ that is attached to a vertex by an edge; i.e. $\overline{\cF}=\emptyset$ (for an example, see Figure \ref{fig:alg}(vi)).
\end{remark}

\begin{remark}\label{rem:noOneLeaf}
By the definition of $\overline{\cF}$, note that applying {\sc processCherries}* never returns a tree $R$ that consists of a single leaf attached to the root vertex labeled $\rho$.
\end{remark}


Now, let $\cG$ and $\cF$ be two forests such that $\cG$ is a super-forest of $\cF$.
Furthermore, let $e$ be an edge and $\{a,c\}$ be a cherry (if it exists) of $\cG$. We say that $e$ {\it is a bad choice} for $\cG$ and $\cF$ if $\cG-\{e\}$ is not a super-forest of $\cF$. Note that $\cG-\{e\}$ always satisfies Condition (2) in the definition of a super-forest. Similarly, we say that $\{a,c\}$ {\it is a bad choice} for $\cG$ and $\cF$ if the forest, say $\cG'$,  that that is obtained from $\cG$ by reducing the cherry $\{a,c\}$ to a new leaf is not a super-forest of $\cF$. Note that $\cG'$ always satisfies Condition (1) in the definition of  a super-forest.\\



We next prove two lemmas that are necessary to establish the main result (Theorem~\ref{thm:main}) of this paper.

\begin{lemma}\label{lemma:cuttingCorrectly}
Let $\fcL$ be a cherry list for two rooted binary phylogenetic $\cX$-trees $S$ and $T$, and let $\cF$ be an maximum-acyclic-agreement forest for  $S$ and $T$. Additionally, let $S'$ and $\cG'$ be  the  tree and the  forest, respectively, that have been obtained from calling  \mbox{{\sc processCherries}*$(S,T,\fcL)$}, and let $\cG$ be the underlying forest for $\cG'$. If $\cG'$ is a super-forest for $\cF$, then $S'$  contains at least one cherry or $\cG=\cF$. 
\end{lemma}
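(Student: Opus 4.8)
The plan is to argue by contradiction: suppose $\cG'$ is a super-forest for the maximum-acyclic-agreement forest $\cF$, yet $S'$ has no cherry and $\cG \neq \cF$. The first observation is that if $S'$ has no cherry, then by Remarks~\ref{rem:onlyOneLeafF} and~\ref{rem:noOneLeaf}, $S'$ contains no vertex at all, so $\overline{\cG'} = \emptyset$; that is, after running {\sc processCherries}*, the reduced forest $\cG'$ consists only of isolated vertices together with (possibly) the $\rho$-component containing at most one edge. Since $\cG$ is the underlying forest for $\cG'$, expanding the reduced cherries recorded in $M$ turns $\cG'$ into $\cG$, and this expansion does not merge or split components; hence $\cG$ also has $\overline{\cG} = \emptyset$ up to the reintroduction of cherries inside its components, and in any case $|\cG| = |\cG'|$. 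The point I want to extract is: $\cG$ is a forest all of whose nontrivial structure has been reduced away, so each component of $\cG$ has label set that is either a singleton in $\cX$, or $\{\rho\}$ (possibly together with one more label via the $\rho$-edge), or a dummy-taxon label obtained by unioning the labels of a cherry.

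Next I would bring in the super-forest hypothesis. Since $\cG'$ (equivalently $\cG$, using Condition~(1) which is preserved under cherry-expansion in the natural way, and Condition~(2) likewise) is a super-forest of $\cF$, Condition~(1) says every component $G_j$ of $\cG$ has $\cL(G_j)$ equal to the union of label sets of some subset $\cF'$ of $\cF$, and Condition~(2) says every leaf label appearing in $\cG$ is contained in some single component of $\cF$. Combining this with the structural description from the previous paragraph: each component of $\cG$ is essentially a ``pre-reduced'' tree whose label set is forced to sit inside a single component of $\cF$ (by Condition~(2) applied to the dummy leaf) and simultaneously to be a union of whole components of $\cF$ (by Condition~(1)). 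These two constraints together force $\cL(G_j)$ to coincide with $\cL(F_i)$ for exactly one $i$, and — because the cherry-reductions recorded in $M$ only ever collapsed common cherries of intermediate trees, which faithfully reflect cherries present inside a single component of the target — $G_j$ is in fact isomorphic to $F_i$ restricted to that label set, i.e. $G_j \cong F_i$. Running this over all components and using that the label sets partition $\cX \cup \{\rho\}$ on both sides, I get a bijection $\cG \to \cF$ that is component-wise an isomorphism, hence $\cG = \cF$, contradicting our assumption.

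The step I expect to be the main obstacle is the middle one: making precise that when $\overline{\cG'} = \emptyset$ and $\cG'$ is a super-forest of $\cF$, one can actually conclude $\cG = \cF$ rather than merely that $\cG$ and $\cF$ have the same partition of $\cX \cup \{\rho\}$. The delicate point is that a component of $\cG$ after cherry-expansion could, a priori, be a tree on the correct label set but with a different topology from the matching component of $\cF$. To rule this out I would need to track, along the execution of {\sc processCherries}, that every cherry reduced was a \emph{common} cherry of the current tree and forest (this is exactly the branch in {\sc processCherries} that calls {\sc cherryReduction}), and that {\sc cherryExpansion} undoes these reductions in reverse order; since a common cherry of $R$ and $\cF$ is, by definition of $\overline{\cF}$ and of common cherry, a genuine cherry of $\overline{\cF}$ as well, each reduction/expansion step preserves the isomorphism type of the relevant component relative to $\cF$. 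I would also need to handle the $\rho$-component carefully, using the convention that $\overline{\cF}$ discards the $\rho$-element when it contains at most one edge, so that the ``$\cG = \cF$'' conclusion is read with the same normalization on both sides. The edge-deletion branches of {\sc processCherries} contribute nothing new here: each such deletion is recorded in passing from $\cG$'s predecessor to $\cG$, and Condition~(1) of super-forest is precisely what guarantees no such deletion ever cut \emph{inside} a component of $\cF$ (that would be a ``bad choice'' in the terminology just introduced), so the components of $\cG$ never over-fragment relative to $\cF$.
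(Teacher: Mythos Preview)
Your overall strategy coincides with the paper's: assume $S'$ has no cherry, deduce $\overline{\cG'}=\emptyset$ via Remarks~\ref{rem:onlyOneLeafF} and~\ref{rem:noOneLeaf}, and then use the two super-forest conditions to force $\cG=\cF$. The paper argues the contrapositive in one stroke (if $\cG\neq\cF$ then some $G_k$ swallows two distinct $F_i,F_j$, and then Condition~(2) fails at the single dummy leaf of $G_k'$), but your partition-matching argument is logically equivalent.

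Two points, however, deserve correction. First, your long digression about tracking cherry reductions to control the \emph{topology} of each $G_j$ is unnecessary and is not how the paper closes that step. Both $\cG$ and $\cF$ are forests \emph{for $T$}; hence if $\cL(G_j)=\cL(F_i)$ then automatically $G_j\cong T|_{\cL(G_j)}=T|_{\cL(F_i)}\cong F_i$. The paper disposes of this in one sentence, and you should too.

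Second, and more substantively, your handling of the $\rho$-component is incomplete. When the $\rho$-component of $\cG'$ is either the isolated vertex $\rho$ or $\rho$ joined to a single leaf $a$, Conditions~(1) and~(2) alone do \emph{not} force $\cL(G_\rho)=\cL(F_\rho)$: the obstruction is the possibility $\cL(F_\rho)=\{\rho\}$, in which case the leaf $a$ could sit inside some other $F_i$ while Condition~(1) is still satisfied by $\cF'=\{F_\rho,F_i\}$. The paper rules this out by invoking \cite[Lemma~1]{BaroniEtAl2005}, which says that in a \emph{maximum}-acyclic-agreement forest the component containing $\rho$ is never the singleton $\{\rho\}$. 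You need this fact (and this is precisely where the hypothesis that $\cF$ is a MAAF, not merely an acyclic-agreement forest, enters the proof); without it your argument does not close.
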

\begin{proof}
Suppose that this is not true. 
 Thus,  $l(S')=0$ (see Remark \ref{rem:noOneLeaf})  and there exists an element in $\cF$ that is not an element in $\cG$. 
Furthermore, since both forests are forests of $T$, we cannot have that there exists $F_i\in \cF$ and $G_j \in \cG$ such that $\cL(F_i ) =  \cL(G_j)$ and $F_i \not\cong G_j$. 
 Since $\cG'$ is a super-forest for $\cF$ {other than $\cF$}, there  exist at least two components $F_i$ and $F_j$ of $\cF$ such that $\cL(F_i)\cup\cL(F_j)\subseteq\cL(G_k)$, where $G_k$ is an element of $\cG$. 
 Furthermore, since 
 $l(S')=0$, we have $\overline{\cG'}=\emptyset$ (see Remark \ref{rem:onlyOneLeafF}).
 Now, since $G_k\in \cG$, either $G_k$ is an isolated vertex $a$ 
 such that $\cL(a)=\cL(G_k) \supseteq (\cL(F_i) \cup \cL(F_j))$, or $G_k$ is a leaf vertex that is attached to the vertex labeled $\rho$ by an edge such that $\cL(a) \cup \{\rho\} = \cL(G_k) \supseteq (\cL(F_i) \cup \cL(F_j))$. Since neither $\cL(F_i)=\{\rho\}$ nor $\cL(F_j)=\{\rho\}$  (see~\cite[Lemma 1]{BaroniEtAl2005}), $\cG'$ does not fulfill Condition (2) in the definition of a super-forest; a contradiction.
\end{proof}

\begin{lemma}\label{l:af}
Let $S$ and $T$ be two rooted binary phylogenetic $\cX$-trees, and let $\cF$ be a forest that is returned from calling {\sc cherryExpansion} (line 4 of the pseudocode of Algorithm~\ref{a:recForests}) while executing ${\textnormal{\sc allMAAFs}}(S,T,S,T,k,\emptyset)$. Then, $\cF$ is an agreement forest for $S$ and $T$.
\end{lemma}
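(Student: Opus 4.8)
The plan is to track what each recursive call to {\sc allMAAFs} does to the forest $\cF$ along the single computational path that leads to the invocation of {\sc cherryExpansion} in line~4, and to argue that, at that moment, the pair $(R,\cF)$ already encodes an agreement forest for $S$ and $T$ once the dummy taxa are expanded. First I would formalise the computational path: a run of {\sc allMAAFs}$(S,T,S,T,k,\emptyset)$ that reaches line~4 corresponds to a sequence of operations on the initial pair $(S,T)$, each of which is either (a) deleting one of the edges $e_a$, $e_c$, $e_B$ from the current forest and restricting the current tree $R$ to $\cL(\overline{\cF})$, or (b) calling {\sc cherryReduction} on a common cherry $\{a,c\}$ of the current $R$ and $\cF$. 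This is exactly the sequence of operations performed by {\sc processCherries} on some cherry list $\fcL$ for $S$ and $T$ (this is the content of Remark~3 following the description of {\sc processCherries}), so $\cF$ in line~4 is precisely the forest returned by {\sc processCherries}$(S,T,\fcL)$, and the tree $R$ at that point satisfies $l(R)=0$, i.e. $\overline{\cF'}=\emptyset$ just before the expansion, where $\cF'$ denotes the reduced forest (Remark~\ref{rem:onlyOneLeafF}).

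The key invariant I would maintain along this path is: \emph{after each step, the current forest (after undoing the cherry reductions recorded in $M$) is a forest for both $S$ and $T$}, or more carefully, that the underlying forest of the current reduced forest is a forest for $S$ and a forest for $T$. This is clear at the start, since $\cF=T$ is trivially a forest for $T$ and, regarded with the $\rho$-edge, also for $S$ (the one-component forest). Each edge deletion preserves ``being a forest for $S$'' and ``being a forest for $T$'' by the transitivity built into the definition of $\cF'=\cF-E'$ given in the Forests paragraph. Each call to {\sc cherryReduction} only contracts a common cherry and records it in $M$; undoing it via {\sc cherryExpansion} restores the forest, so the \emph{expanded} forest is unchanged by reduction steps and hence still a forest for both trees. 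Thus $\cF$ in line~4, being the fully expanded object, is a forest for $S$ and a forest for $T$ simultaneously — which gives conditions analogous to (ii) of the agreement-forest definition once we know the components are common restricted subtrees, and it gives that the label sets partition $\cX\cup\{\rho\}$.

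It remains to check the two genuinely substantive points. First, that $\rho\in\cL(F_\rho)$ for the component $F_\rho$ containing $\rho$ — this follows because the $\rho$-pendant edge is never itself deleted along the path (the cherries picked are cherries of $R$, which by Remark~\ref{rem:noOneLeaf} is never just a single leaf on $\rho$, and the edges $e_a,e_c,e_B$ associated with a cherry are never the $\rho$-edge), and cherry reduction never removes $\rho$. Second, and this is where I expect the main obstacle, that the components of $\cF$ are \emph{vertex-disjoint} embedded subtrees of $S$ and of $T$, i.e. condition (iii): deleting a set $E$ of edges from $T$ and suppressing degree-2 vertices automatically yields vertex-disjoint $T(\cL(F_i))$'s, so the $T$-side is immediate from $\cF$ being a forest for $T$; the $S$-side requires knowing that the same partition of $\cX\cup\{\rho\}$ is realised by deleting edges from $S$, which is exactly what ``$\cF$ is a forest for $S$'' asserts — hence the crux is really the bookkeeping that the restriction operations $R\gets R|_{\cL(\overline{\cF})}$ and the interplay with $\overline{\cF}$ (dropping isolated vertices and the trivial $\rho$-component) never destroy the property that the accumulated edge-deletions-plus-expansions describe a legitimate forest for $S$. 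I would handle this by an induction on the number of steps in the path, carrying the stronger hypothesis that $l(R)=l(\overline{\cF})$ together with ``the expansion of the current forest is a forest for $S$ and for $T$'', and invoking that when $l(R)=0$ we have $\overline{\cF'}=\emptyset$ so every non-$\rho$ component has collapsed to an isolated labelled vertex, at which point conditions (i)–(iii) are all visibly met.
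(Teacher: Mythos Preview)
Your proposed invariant --- that after each step the expanded current forest is a forest for both $S$ and $T$ --- is false, and already at the base case. The initial forest is $\cF=\{T\}$, and by the paper's definition $\{T\}$ is a forest \emph{for} $S$ only if $T$ can be obtained from $S$ by deleting zero edges, i.e.\ only if $S\cong T$. (You may be conflating ``forest \emph{on} $\cL(S)$'' with ``forest \emph{for} $S$''.) The transitivity you invoke, $\cF'=\cF-E'$, only says that further deletions from a forest for $T$ yield another forest for $T$; it gives nothing about $S$ unless the starting forest is already a forest for $S$, which it is not. Hence the inductive hypothesis you intend to carry cannot be established at step~$0$, and the argument collapses.

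The paper does not attempt any such invariant. It uses only that the final expanded forest is a forest for $T$ (trivially, since every intermediate forest is obtained from $T$ by edge deletions), and then exploits the terminal condition $l(R)=0$ (equivalently $\overline{\cF'}=\emptyset$) to argue by contradiction that conditions (ii) and (iii) of the agreement-forest definition hold with respect to $S$. Case~(1) treats a failure of $F_i\cong S|_{\cL(F_i)}$: since the component has been shrunk to a single labelled vertex, one of the cherry reductions along the way would have had to reduce a pair that is \emph{not} a cherry of the running tree $R$, contradicting the algorithm. Case~(2) treats a failure of vertex-disjointness of $S(\cL(F_i))$ and $S(\cL(F_j))$: one checks that then $T|_{\cL(F_i)}$ and $T|_{\cL(F_j)}$ cannot both be collapsed to single vertices by cherry reductions without cutting into the other component. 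The key point is that compatibility with $S$ is enforced only through the fact that every cherry is read off from a restriction of $S$; it is not a property of the intermediate forests. Your plan needs to be reorganised around this observation rather than around a running forest-for-$S$ hypothesis.
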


\begin{proof}
Let $\cF'$ be the forest for which calling {\sc cherryExpansion} returns $\cF$. By construction, $\cF$ is a forest for $T$. Now, for the purpose of deriving a contradiction, assume that $\cF$ is not a forest for $S$. 
Since $\cF$ is a forest for $T$ and $\cL(S)=\cL(T)$, the label sets of the elements in $\cF$ partition $\cL(S)$. Thus, it is sufficient to consider the following two cases:\\

\noindent{\bf Case (1).} Assume that there exists an element $F_i$ in $\cF$ such that $S|_{\cL(F_i)}\ncong F_i$. Since $l(R)=0$ (line 3 of the pseudocode of Algorithm~\ref{a:recForests}), we have by Remark \ref{rem:onlyOneLeafF} that $\overline{\cF'}=\emptyset$.
This implies that the element of $\cF$ with leaf sets $\cL(F_i)$ has been shrunk to a single vertex or to a single leaf that is attached to the vertex labeled $\rho$ by an edge. But, since $S|_{\cL(F_i)}\ncong F_i$, one of the cherry reductions that has been used to shrink $T|_{\cL(F_i)}$ is called for a cherry that is not a cherry of $R$, where $R$ is the tree that is considered in some recursive call of  ${\textnormal{\sc allMAAFs}}(S,T,S,T,k,\emptyset)$ (see pseudocode of Algorithm~\ref{a:recForests}); a contradiction.\\


\noindent{\bf Case (2).} Assume that there exist two elements $F_i$ and $F_j$ in $\cF$ such that $S(\cL(F_i))$ and $S(\cL(F_j))$ are not vertex-disjoint in $S$. 
Let $S'=S|_{\cL(F_i) \cup \cL(F_j)}$. For example, the simplest case is shown in Figure \ref{fig:helpLemma2}, where the subtrees in white are part of $F_i$ and the ones in black of $F_j$. In general, a straightforward check now shows  that it is not possible to shrink both $T|_{\cL(F_i)}$ and  $T|_{\cL(F_j)}$ to two distinct single vertices in $\cF'$ (one possibly being attached to the vertex labeled $\rho$) by using cherry reductions because to shrink one of the two components to a single vertex it is necessary to cut a subtree of the other component, thereby contradicting that $F_i$ and $F_j$ are both elements in $\cF$. Referring back to Figure \ref{fig:helpLemma2}, $F_j$ cannot be shrunk to a single vertex by using a list of cherry reductions without cutting $S_1$.
\\
\\
Combining both cases establishes the lemma.
\end{proof}

\begin{figure}
\begin{center}
\begin{tabular}{c}
\includegraphics[width = 4.5cm]{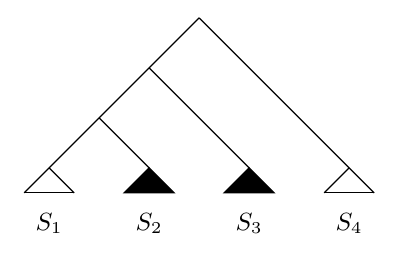}
\\
\end{tabular}
\caption{An example of a rooted phylogenetic tree $S'$ that is  used in Case (2) of the proof of Lemma~\ref{l:af} (for details, see text).}
\label{fig:helpLemma2}
\end{center}
\end{figure}

\begin{theorem}\label{thm:main}
Let $S$ and $T$ be two rooted binary phylogenetic $\cX$-trees. Calling $$\textnormal{{\sc allMAAFs}}(S,T,S,T,k,\emptyset)$$ returns all maximum-acyclic-agreement forests for $S$ and $T$ if and only if $k \geq h(S,T)$. 
\end{theorem}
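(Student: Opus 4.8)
The plan is to prove the two directions of the ``if and only if'' separately, with the forward direction (completeness: every maximum-acyclic-agreement forest is returned) being the substantive one, and the reverse direction (soundness: everything returned is a maximum-acyclic-agreement forest) being comparatively routine.

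\textbf{Soundness.} Suppose $k \geq h(S,T)$ and consider any forest $\cG$ that appears in the output set $\boldsymbol{\cF}$ of $\textnormal{\sc allMAAFs}(S,T,S,T,k,\emptyset)$. By inspection of the pseudocode, $\cG$ is returned only via line~4--6, i.e.\ $\cG = \textnormal{\sc cherryExpansion}(\cF,M)$ for some forest $\cF$ reached along a computational path, at a moment when $l(R)=0$ and $AG(S,T,\cG)$ is acyclic. By Lemma~\ref{l:af}, $\cG$ is an agreement forest for $S$ and $T$; since $AG(S,T,\cG)$ is acyclic, $\cG$ is an acyclic-agreement forest for $S$ and $T$. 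It remains to argue minimality: the bookkeeping with $k_{\min}$ and the running updates of $\boldsymbol{\cF}$ guarantee that $\boldsymbol{\cF}$ only retains forests of the smallest size found, and the returned integer is $|\cG|-1$; one shows by induction on the recursion that the size returned along any path equals the number of edge deletions performed along that path plus the number of components of $\cF$ at that node, and that the globally minimal such value over all paths equals $h(S,T)$ (using Theorem~\ref{t:hybrid}). Hence every $\cG \in \boldsymbol{\cF}$ has $|\cG| = h(S,T)+1$ and is therefore a maximum-acyclic-agreement forest. Also, if $k < h(S,T)$, every branch eventually hits $k<0$ before reaching a valid acyclic-agreement forest of minimum size, so $\boldsymbol{\cF}=\emptyset$, which handles the ``only if'' part.

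\textbf{Completeness.} Fix a maximum-acyclic-agreement forest $\cF$ for $S$ and $T$ and assume $k \geq h(S,T)$. The goal is to exhibit a computational path of $\textnormal{\sc allMAAFs}(S,T,S,T,k,\emptyset)$ along which $\cF$ is reconstructed by the final {\sc cherryExpansion}. The key invariant, maintained along a carefully chosen path, is that the current reduced forest $\cG'$ (equivalently, its underlying forest $\cG$ obtained via {\sc processCherries}) is a super-forest of $\cF$ (Observation~\ref{obs:baseCase} gives the base case, since $S = \cF$ at the start). At each recursive step, with $\{a,c\}$ the chosen cherry of $R$: if $\{a,c\}$ is a common cherry of $R$ and $\cF$ that is \emph{not} a bad choice for $\cG$ and $\cF$, take the {\sc cherryReduction} branch; otherwise at least one of the associated edges $e_a$, $e_c$, $e_B$ is not a bad choice (this is the crux — one must show that among the two or three edges associated with $\{a,c\}$, at least one deletion keeps $\cG'$ a super-forest of $\cF$, appealing to the structure of contradicting cherries and the definition of $e_B$), and we follow that branch. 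Lemma~\ref{lemma:cuttingCorrectly} guarantees the process does not get stuck: as long as $\cG \neq \cF$, the reduced tree $S'$ still has a cherry, so the recursion continues; and each step strictly decreases a suitable potential (leaves of $R$, or remaining discrepancy between $\cG$ and $\cF$), so it terminates with $\cG = \cF$. One then checks that the number of edge deletions used is exactly $|\cF|-1 \leq k$, so no branch is pruned by the $k<0$ test, and that when the path reaches $l(R)=0$ the expanded forest equals $\cF$, whose agreement-forest-ness is confirmed by Lemma~\ref{l:af} and whose acyclicity (being a maximum-acyclic-agreement forest by hypothesis) passes the $AG(S,T,\cdot)$ test; hence $\cF \in \boldsymbol{\cF}$.

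\textbf{Main obstacle.} The hard part is the ``not a bad choice'' step inside completeness: showing that, given a super-forest $\cG$ of the target forest $\cF$ and a cherry $\{a,c\}$ of $R$, at least one of the moves the algorithm offers (reduce the cherry, or delete one of $e_a,e_B,e_c$) produces a forest that is still a super-forest of $\cF$. This requires a careful case analysis on whether $\{a,c\}$ is common or contradicting in $R$ versus $\cF$, on whether $a \sim_\cF c$ or $a \sim_\cG c$, and on where the relevant leaves sit relative to the path between them in the component $F_i$ — in particular arguing that if both $e_a$ and $e_c$ are bad choices then $\{a,c\}$ must be a cherry ``inside'' some component of $\cF$, forcing $e_B$ (which lies on the $a$--$c$ path) to be a good choice. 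I would isolate this as a separate structural lemma before assembling the inductive path argument. The remaining ingredients — that super-forest is preserved, that {\sc processCherries} faithfully mimics a path of {\sc allMAAFs}, and the arithmetic relating deletions to returned sizes — are bookkeeping that follows from the definitions and Lemmas~\ref{lemma:cuttingCorrectly} and~\ref{l:af}.
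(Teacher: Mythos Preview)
Your proposal is essentially correct and follows the same approach as the paper: soundness via Lemma~\ref{l:af} plus the acyclicity test and the $k_{\min}$ bookkeeping, and completeness by maintaining the super-forest invariant along a computational path (base case from Observation~\ref{obs:baseCase}, progress from Lemma~\ref{lemma:cuttingCorrectly}), with the crux being the case analysis that at least one of the branches associated to the chosen cherry $\{a,c\}$ preserves the super-forest property. Two small remarks: in your base case you write ``$S=\cF$ at the start,'' but the initial forest parameter is $T$, not $S$; and your sketch of the crux (``if both $e_a$ and $e_c$ are bad then $e_B$ is forced to be good'') is not quite how the paper argues it --- the paper proceeds by contradiction, showing that if \emph{all} available choices were bad then the target $\cF$ would violate either the isomorphism or the vertex-disjointness condition of an agreement forest, with separate subcases depending on whether $\cL(a)$ and $\cL(c)$ lie in the same or different components of $\cF$ and on how $\cL(B)$ interacts with them.
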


\begin{proof}
{By Lemma~\ref{l:af}, each forest that is calculated in the course of executing {\sc allMAAFs}$(S,T,S,T,k,\emptyset)$ and checked for acyclicity (see line 5 of the pseudocode of Algorithm~\ref{a:recForests}) is an  agreement forest for $S$ and $T$. Thus, if $k\geq h(S,T)$, each forest that is returned from running the algorithm is an acyclic-agreement forest for $S$ and $T$. Moreover, 
since $k$ is updated to take advantage of the size of the best solutions that
previous recursive calls have found (lines 13, 16, 27 and 36), only maximum-acyclic-agreement forests are returned. It is therefore sufficient to show that each maximum-acyclic-agreement forest for $S$ and $T$ is returned by the algorithm.} 

Let {\sc allMAAFs}$(S,T,S,T,k,\emptyset)$ be a call of Algorithm~\ref{a:recForests}, and, {for each $l\in\{1,2,\ldots,h(S,T)+1\}$}, let $\cG_{l}$ be the set of all reduced forests of size $l$ that have been computed by executing this call. In other words, $\cG_l$ precisely  contains all forests that are used as a parameter in a recursive call to {\sc allMAAFs} in lines 11, 14, 25  and 34 of the pseudocode and, in particular, $T\in \cG_1$. Furthermore, let $\cF$ be a maximum-acyclic-agreement forest for $S$ and $T$. We will prove that, for each $l\in\{1,2,\ldots,h(S,T)+1\}$, the set
$\cG_{l}$ contains a reduced forest $\cG'$ that is a super-forest for $\cF$. 
This implies that $\cG_{h(S,T)+1}$ contains a reduced forest $\cG'$ that is a super-forest of $\cF$ such that $|\cF|=|\cG|$, where $\cG$ is the underlying forest of $\cG'$. Hence, as $\cG$ and $\cF$ are both forests for $T$, it follows that $\cG$ is isomorphic to $\cF$, thereby establishing the theorem. 

We proceed by induction on $l$. If $l=1$, then the result follows from Observation \ref{obs:baseCase} and because  $T\in\cG_1$.  Now suppose that the result holds whenever $l \leq h(S,T)$. We will next show that the claim holds for $l +1$. Let  $\cG'$ be a reduced forest $\cG'$ of size $l$ such that $\cG'$ is a super-forest for $\cF$. By the induction assumption, $\cG'$ exists. Let $\cG$ be the underlying forest of $\cG'$. Furthermore, let $\fcL^{\cG}$ be the cherry list  that has been used by calling  {\sc allMAAFs}$(S,T,S,T,k,\emptyset)$ to construct $\cG'$, and let $R$ be the phylogenetic tree that is returned from calling {\sc processCherries*}($S,T,\fcL^{\cG}$). Since $|\cG|<|\cF|$, it follows from  Lemma  \ref{lemma:cuttingCorrectly} that $R$ contains a cherry $\{a,c\}$. Let $\cL(a) \subset \cX$  and $\cL(c)  \subset \cX$ be the label sets of the leaf vertices $a$ and $c$, respectively, in $R$, and let $a'\in\cL(a)$ and $c'\in\cL(c)$. Furthermore, if $\{a,c\}$ is a contradicting cherry for $R$ and $\cG'$ and $a \sim_{\cG'} c$, let $\cL(B)\subset\cX$ be the union of labels of all leaf vertices that are contained in the pendant subtree below $e_B$ in $\cG'$. 
Note that, since $\cG'$ is a super-forest for $\cF$, we have that there exist two elements $F_i, F_j \in \cF$, not necessarily distinct, such that $\cL(a) \subseteq \cL(F_i)$ and  $\cL(c) \subseteq \cL(F_j)$. The rest of the proof distinguishes two cases depending on whether $\{a,c\}$ is a contradicting or common cherry for $R$ and $\cG'$.

First, suppose that $\{a,c\}$ is a contradicting cherry for $R$ and $\cG'$. To derive a contradiction, assume that $\cG_{l+1}$ does not contain any reduced forest that is a super-forest of $\cF$. 
In particular, this implies that deleting any edge associated with $\{a,c\}$ is a bad choice
for $\cG'$ and $\cF$ 
since no resulting forests is a super-forests for $\cF$ although they all satisfy Condition (2) in the definition of a super-forest.  Thus, one of the following holds:
\begin{enumerate}
\item[(1)] $a \nsim_{\cG'} c$ and both edges $\{e_a\}$ and  $\{e_c\}$ are bad choices for $\cG'$ and $\cF$;
\item[(2)] $a \sim_{\cG'} c$ and all edges $\{e_a\}$, $\{e_B\}$ and $\{e_c\}$ are bad choices for $\cG'$ and $\cF$.
\end{enumerate}


\noindent {\bf Case (1).} Observe that neither $\cG'-\{e_a\}$ nor $\cG'-\{e_c\}$ is a super-forest of $\cF$. 
This implies that $\cF$ does not contain an element $F_i$
such that $\cL(a) =\cL(F_i)$ or $\cL(c) =\cL(F_i)$. Thus $\cF$ contains two distinct components $F_j$ and $F_k$  such that $\cL(a)\subset\cL(F_j)$ and $\cL(c)\subset\cL(F_k)$, and for which  there exist elements $x,y \in \cX$ such that $x\in \cL(F_j)$, $x\notin \cL(a)$, $y\in \cL(F_k)$, and $y\notin \cL(c)$. By construction, each of $x$ and $y$ is contained in a label of a distinct leaf in $R$. Now, recalling that $\{a,c\}$ is a cherry of $R$, we have that $\mrca_R(a',c',x,y)$ is an ancestor of $\mrca_R(a',c')$ and, therefore, $\mrca_S(a',c',x,y)$ is an ancestor of $\mrca_S(a',c')$.
Furthermore, as $a',x \in \cL(F_j)$ and  $c',y \in \cL(F_k)$,  it now follows that $S(\cL(F_j))$ and $S(\cL(F_k))$ do both have the vertex $\mrca_S(a',c')$ in common; thereby contradicting that $\cF$ is an agreement forest for $S$ and $T$.\\

\noindent{\bf Case (2).} Observe that no forest in $\{\cG'-\{e_a\},\cG'-\{e_B\},\cG'-\{e_c\}$ is a super-forests of $\cF$.  
This implies that  $\cF$ does not contain any 
element $F_i$ such that $\cL(a) =\cL(F_i)$ or $\cL(c) =\cL(F_i)$ or a subset $\cF'$ of $\cF$ such that $\cL(B) =\cL(\cF')$. We next consider three subcases. 

First, assume that $\cF$ contains a component $F_j$ such that $\cL(a)\subset \cL(F_j)$, $\cL(c)\subset \cL(F_j)$ and there exists at least one element in the intersection $\cL(B) \cap \cL(F_j)$. Let $b'$ be such an element.
By construction, each of $a'$, $b'$, and $c'$ is contained in a label of a distinct leaf in $R$. Now, recalling that $\{a,c\}$ is a cherry of $R$, we have that $\mrca_R(a',b',c')$ is an ancestor of $\mrca_R(a',c')$ and, therefore, $\mrca_S(a',b',c')$ is an ancestor of $\mrca_S(a',c')$. On the contrary, let $G_k$ be the element of $\cG'$ that contains the leaf labeled $\cL(a)$ and the leaf labeled $\cL(c)$. Since $G_k$ also contains a leaf whose label contains $b'$ and due to the definition of $e_B$, {we have that $\mrca_{G_k}(a',b',c')$ is an ancestor of $\mrca_{G_k}(a',b')$ or $\mrca_{G_k}(c',b')$ and, therefore, $\mrca_{T}(a',b',c')$ is an ancestor of $\mrca_{T}(a',b')$ or $\mrca_{T}(c',b')$}. Thus, $S|\{a',b',c'\}\ncong T|\{a',b',c'\}$; thereby contradicting that $\cF$ is a maximum-acyclic-agreement forest for $S$ and $T$.

Second, assume that $\cF$ contains a component $F_j$ such that $\cL(a)\subset \cL(F_j)$, $\cL(c)\subset \cL(F_j)$, and $\cL(B) \cap \cL(F_j) = \emptyset$. Then, since there exists no subset $\cF'$ of $\cF$ such that $\cL(B) =\cL(\cF')$, there exists a distinct element $F_k\in\cF$ such that $b' \in \cL(F_k)$ for any $b'\in\cL(B)$ and there exists an element $x\in\cX$ for which $x\in \cL(F_k)$ and $x\notin\cL(B)$.
Let $G_k$ be the element of $\cG'$ that contains the leaf labeled $\cL(a)$. Clearly, $G_k$ also contains the leaf labeled $\cL(c)$ and the leaf whose label contains $b'$. Furthermore, since $\cG$ is a super-forest for $\cF$, note that $G_k$ contains a leaf whose label contains $x$. Furthermore, by the definition of $e_B$, we have that the $\mrca_{G_k}(b',x)$ lies on the path from the leaf  labeled $a'$  to the leaf  labeled $c'$  in $G_k$ and, therefore, the $\mrca_{T}(b',x)$ lies on the path from the leaf labeled $\cL(a)$ to the leaf labeled $\cL(c)$ in $T$. Now, it is easily checked that $F_j$ and $F_k$ are not vertex-disjoint in $T$; thereby contradicting that $\cF$ is an agreement forest for $S$ and $T$.

Third, assume that $\cF$ contains two components 
$F_j$ and $F_k$ such that $\cL(a)\subset \cL(F_j)$ and $\cL(c)\subset \cL(F_k)$. 
Hence, there exist elements  $x,y \in \cX$ such that  
$x\in \cL(F_j)$, $x\notin \cL(a)$, $y\in \cL(F_k)$, and $y\notin \cL(c)$. Note that $x$ or $y$ may or may not be elements of $\cL(B)$. In this case, it is straightforward to see that we can  derive the same contradiction as in Case (1).\\ 

By combining Cases (1) and (2), we deduce that there exists a super-forest of $\cF$ that can be constructed from $\cG'$ by deleting one of $\{e_a,e_c\}$ if $a\nsim_{\cG'} c$ or one of  $\{e_a,e_B,e_c\}$ if $a\sim_{\cG'} c$. Thus, this super-forest is an element of $\cG_{l+1}$. \\


Second, suppose that $\{a,c\}$ is a common cherry for $R$ and $\cG'$. Again, to derive a contradiction, assume that $\cG_{l+1}$ does not contain any reduced forest that is a super-forest of $\cF$. In particular, 
this implies that $e_a$, $e_c$, and $\{a,c\}$ are all bad choices for $\cG'$ and $\cF$. Thus, similar to Case (1),  $\cF$ contains two distinct components $F_j$ and $F_k$  such that $\cL(a)\subset\cL(F_j)$ and $\cL(c)\subset\cL(F_k)$, and for which  there exist elements $x,y \in \cX$ such that  $x\in \cL(F_j)$, $x\notin \cL(a)$, $y\in \cL(F_k)$, and $y\notin \cL(c)$. Applying the same argument as in Case (1), this contradicts that the elements of $\cF$ are vertex-disjoint in $S$. Thus, one of $e_a$, $e_c$, or $\{a,c\}$ is not a bad choice for $\cG'$ and $\cF$. If $e_a$ or $e_c$ is not a bad choice for $\cG'$ and $\cF$, then $\cG_{l+1}$ clearly contains a forest that is a super-forest for $\cF$. On the other hand, if  $e_a$ and $e_c$ are  both bad choices for $\cG'$ and $\cF$, then $\{a,c\}$ is not such a choice. Hence, calling {\sc cherryReduction}$(R,\cG',M,\{a,c\})$ returns a forest $\cG''$ such that $\cG''$ is a super-forest for $\cF$. Note that the underlying forest of $\cG''$ is $\cG$. Since $|\cG|<|\cF|$, it follows from Lemma  \ref{lemma:cuttingCorrectly}, that after some additional recursions of {\sc allMAAFs}, the algorithm chooses a cherry in line 10 of the pseudocode of {\sc allMAAFs} and subsequently deletes an edge in order to obtain a forest of size $|\cG|+1$. Then by applying the arguments of Cases (1) and (2), and the argument of this paragraph (depending on the type of cherry the algorithm has chosen), it is easily checked that $\cG_{l+1}$ contains a forest that is a super-forest of $\cF$. This completes the proof of the theorem.
\end{proof}

\section{Running time of the algorithm}\label{sec:RT}
In this section, we detail the running time of the algorithm  {\sc allMAAFs}.

%

\begin{theorem}\label{t:rt}
Let $S$ and $T$ be two rooted binary phylogenetic $\cX$-trees, and let $k$ be an integer.The running time of {\sc allMAAFs}$(S,T,S,T,k,\emptyset)$ is $O(3^{|\cX|})$.
\end{theorem}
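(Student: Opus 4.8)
The goal is to bound the running time of {\sc allMAAFs}$(S,T,S,T,k,\emptyset)$ by $O(3^{|\cX|})$. The plan is to analyze the recursion tree of the algorithm by tracking a single non-negative integer potential that strictly decreases along every recursive call, and show this potential is bounded above by roughly $|\cX|$, so that the branching factor of at most three yields the claimed bound (after accounting for the polynomial work done at each node).

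\textbf{Step 1: identify the right potential.} First I would argue that at every recursive call to {\sc allMAAFs} in lines 11, 14, 25, and 34, the quantity $l(R)$ strictly decreases: in the edge-deletion branches ($e_a$, $e_c$, $e_B$) the new tree is $R|_{\cL(\overline{\cF'})}$, and deleting an edge of $\cF$ either isolates a leaf (removing it from $\overline{\cF'}$) or removes the $\rho$-component when it becomes a single edge, and in either case the restriction $R|_{\cL(\overline{\cF'})}$ loses at least one labelled vertex; in the {\sc cherryReduction} branch, two leaves of $R$ are merged into one, so $l(R)$ drops by one. (This is essentially the termination argument already sketched before the pseudocode, where it is noted that at each recursive call either $k$ decreases or the number of leaves in $R$ decreases; for the running-time bound I want the leaf count specifically, since $k$ can be taken as large as we like but $l(R) = l(\overline{\cF})$ is controlled by $|\cX|$.) Here I would use Remark~\ref{rem:noOneLeaf} to note that $l(R) \in \{0\} \cup \{2, 3, \ldots\}$ and that once $l(R) = 0$ the algorithm terminates in one step via {\sc cherryExpansion} and the acyclicity test.

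\textbf{Step 2: bound the depth and count the leaves of the recursion tree.} Initially $R = S$, so $l(R) = |\cX| + 1$ (the $|\cX|$ taxa plus $\rho$). Since $l(R)$ drops by at least one per level and the recursion stops when $l(R) \le 1$, the recursion tree has depth at most $|\cX|$. Each internal node of the recursion tree spawns at most three children (the two branches deleting $e_a$, $e_c$, plus at most one further branch, either the $e_B$-deletion for a contradicting cherry with $a \sim_\cF c$, or the {\sc cherryReduction} branch for a common cherry). Hence the recursion tree has at most $3^{|\cX|}$ leaves and $O(3^{|\cX|})$ nodes in total. Then I would observe that the non-recursive work performed at each node --- picking a cherry of $R$, identifying $e_a, e_c, e_B$, forming the restrictions $R|_{\cL(\overline{\cF'})}$, building $AG(S,T,\cF')$ and testing it for acyclicity, running {\sc cherryExpansion} --- is polynomial in $|\cX|$, say $p(|\cX|)$. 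Therefore the total running time is $O(3^{|\cX|} \cdot p(|\cX|))$; absorbing the polynomial into the exponential's base (or, more carefully, noting $3^{|\cX|} p(|\cX|) = O((3+\varepsilon)^{|\cX|})$ and that the paper's own abstract states the bound as $O(3^{14k} + p(n))$ with an explicit additive polynomial) gives the stated $O(3^{|\cX|})$.

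\textbf{Main obstacle.} The delicate point is Step~1: verifying that $l(R)$ genuinely strictly decreases in \emph{every} branch, in particular in the edge-deletion branches where the subtlety is that deleting a single edge $e$ from $\cF$ and then passing to $\overline{\cF - \{e\}}$ and restricting $R$ could, a priori, leave $l(R)$ unchanged if the deleted edge were ``invisible'' to $R$. I would handle this by noting that $e_a$ and $e_c$ are the pendant edges at the leaves labelled $\cL(a)$ and $\cL(c)$ where $\{a,c\}$ is a cherry of $R = R|_{\cL(\overline{\cF})}$, so cutting $e_a$ isolates that leaf, which is then removed in forming $\overline{\cF-\{e_a\}}$ (or removed as part of the $\rho$-element), strictly shrinking $\cL(\overline{\cF})$ and hence $l(R|_{\cL(\overline{\cF')})})$; the $e_B$ case is similar since $e_B$ lies strictly below the path between $a'$ and $c'$ and its removal detaches a nonempty pendant subtree. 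A secondary, purely cosmetic obstacle is reconciling the bound $O(3^{|\cX|})$ with the abstract's $O(3^{14k}+p(n))$; since $k \le h(S,T) \le |\cX|$ for the relevant inputs, the theorem as stated is the weaker, cleaner bound and no reconciliation is strictly needed for this proof, but I would remark on it for the reader's benefit.
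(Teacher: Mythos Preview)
Your overall strategy---bound the recursion depth by $O(|\cX|)$ and combine with the branching factor of at most three---is exactly what the paper does. However, your chosen potential $l(R)$ does \emph{not} strictly decrease in every branch, and the $e_B$ case is precisely where your argument breaks.

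Recall that $e_B=\{u,v\}$ has $u$ on the $a'$--$c'$ path in $F_i$ and $v$ off that path, with the subtree rooted at $v$ being the piece detached by the deletion. Nothing in the definition forces this subtree to be a single leaf; it may contain two or more leaves. In that situation, deleting $e_B$ produces two new components, neither of which is an isolated vertex (the $a'$--$c'$ side still contains both $a'$ and $c'$, and the $v$-side has at least two leaves), and neither of which is a one-edge $\rho$-component. Hence $\cL(\overline{\cF-\{e_B\}})=\cL(\overline{\cF})$, the restriction $R|_{\cL(\overline{\cF-\{e_B\}})}$ equals $R$, and $l(R)$ is unchanged. Your sentence ``its removal detaches a nonempty pendant subtree'' is true but does not deliver the claimed drop in $l(R)$; you correctly anticipated that an ``invisible'' edge would be the obstacle, and $e_B$ with a multi-leaf subtree is exactly such an edge.

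The paper avoids this by counting differently. Each cherry reduction lowers the number $l(\cF)$ of labeled vertices by one while leaving $|\cF|$ fixed; each edge deletion leaves $l(\cF)$ fixed (only unlabeled degree-$2$ vertices are suppressed) while raising $|\cF|$ by one. Thus the potential $l(\cF)-|\cF|$ drops by exactly one at \emph{every} recursive call. Initially $l(\cF)-|\cF|=(|\cX|+1)-1=|\cX|$, and at termination $\overline{\cF}=\emptyset$ forces every component to be an isolated labeled vertex (or the $\rho$-component to be a single edge), so $l(\cF)-|\cF|\in\{0,1\}$. Equivalently, as the paper phrases it, after $n$ edge deletions one needs about $|\cX|-n-1$ cherry reductions to finish, so the recursion depth along any path is $n+m\le|\cX|$. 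Replace your Step~1 with this potential (or the equivalent edge-plus-reduction count) and your Step~2 goes through unchanged.
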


\begin{proof}
 Let $\cF$ be a forest for $T$ that has been obtained from $T$ by deleting $n$ edges. Recall that {\sc allMAAFs} stops when $\overline{\cF}=\emptyset$ (see Remark \ref{rem:onlyOneLeafF}).
  It is easy to see that, $|\cX| - n-1$ cherry reductions are needed to reduce $\cF$ to a forest, say $\cF'$, such that $\overline{\cF'}=\emptyset$. 
 Thus, the number of recursive calls is $O(|\cX|)$.  
Since {\sc allMAAFs} is called for at most 3 times from within each recursion, it now follows that the running time of {\sc allMAAFs}$(S,T,R,\cF,k,M)$ is $O(3^{|\cX|})$ as claimed.
\end{proof}

While the worst-case running time that is presented in Theorem~\ref{t:rt} is purely theoretical, it can be significantly optimized in the following way. Bordewich and Semple~\cite{sempbordfpt2007} showed that the problem of calculating the minimum number of hybridization events that is needed to simultaneously explain two rooted binary phylogenetic $\cX$-trees $S$ and $T$ is fixed-parameter tractable. They used two reductions---called the {\it subtree and chain reduction}---to establish this result. Loosely speaking, these reductions replace different types of features that are common to $S$ and $T$ with a small number of new leaves, thereby shrinking the original trees to their respective cores while preserving their hybridization number in a well-defined way. In fact, these two reductions are sufficient to yield a kernelization of the above-mentioned problem. More precisely, it is shown in~\cite[Lemma 3.3]{sempbordfpt2007} that, by repeatedly applying the subtree and chain reductions to $S$ and $T$ until no further reduction is possible, the leaf set size of the so-obtained rooted binary phylogenetic trees is at most $14h(S,T)$. 
It is now straightforward to see that modifying {\sc allMAAFs}$(S,T,R,\cF,k,M)$  in the following way is sufficient to make use of this result. 
\begin{enumerate}
\item If $R=S$ and $\cF=T$, apply the subtree and chain reduction until no further reduction is possible and directly return $(\emptyset,k-1)$ if the leaf set size of the obtained trees is greater than $14k$.
\item Introduce a new global variable, say $w$, that is used to keep track of the weight of each initially reduced common chain of $S$ and $T$ (for details, see~\cite{sempbordfpt2007}). Additionally, whenever {\sc cherryExpansion} is called for a forest throughout a run of {\sc allMAAFs}, also call {\sc subtreeExpansion} and {\sc chainExpansion} to reverse each initially performed subtree and chain reduction, respectively. 
\item For each potential acyclic-agreement forest $\cF'$ for $S$ and $T$ that is returned from calling {\sc cherryExpansion}, {\sc subtreeExpansion}, and {\sc chainExpansion} (see line 4 of the pseudocode of {\sc allMAAFs}),  do not only check if {$\cF'$} is acyclic, but also whether or not it is a so-called legitimate-agreement forest (for details, see~\cite{sempbordfpt2007}). Note that this additional check can be performed in polynomial time. 
\end{enumerate}
\noindent We denote this extended version by {\sc allMAAFs*}$(S,T,S,T,k,\emptyset)$.\\

Now, noting that the subtree and chain reduction can be computed in $O(n^3)$ for two rooted binary phylogenetic $\cX$-trees, where $n=|\cX|$~\cite{sempbordfpt2007}, the next corollary is an immediate consequence of Theorem~\ref{t:rt}, and the kernelization ideas that are presented in~\cite{sempbordfpt2007} and briefly summarized prior to this paragraph.
\begin{corollary}
Let $S$ and $T$ be two rooted binary phylogenetic $\cX$-trees, and let $k$ be an integer.The running time of {\sc allMAAFs*}$(S,T,S,T,k,\emptyset)$ is $O(3^{14k}+n^3)$, where  $n=|X|$.
\end{corollary}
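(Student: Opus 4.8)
The plan is to combine Theorem~\ref{t:rt} with the kernelization machinery of Bordewich and Semple~\cite{sempbordfpt2007}, arguing that the three modifications listed above faithfully reduce the general instance to a bounded-size kernel without altering the output set of maximum-acyclic-agreement forests. First I would recall that applying the subtree and chain reductions exhaustively to $S$ and $T$ takes $O(n^3)$ time (as stated in~\cite{sempbordfpt2007}) and, by~\cite[Lemma 3.3]{sempbordfpt2007}, produces a pair of reduced trees $S_0, T_0$ on a leaf set of size at most $14h(S,T)$, with $h(S,T)$ preserved. Thus if the input parameter $k$ satisfies $k\geq h(S,T)$, the reduced trees have at most $14k$ leaves, and Modification~1 correctly rejects (returning the empty set) exactly when $k<h(S,T)$, so no valid instance is discarded.

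The second step is to invoke Theorem~\ref{t:rt} on the kernel: running \textsc{allMAAFs}$(S_0,T_0,S_0,T_0,k,\emptyset)$ costs $O(3^{|\cX_0|})=O(3^{14k})$, since $|\cX_0|\leq 14k$. The remaining work is to check that Modifications~2 and~3 cost only polynomial overhead and do not change the answer. For Modification~2, each call to \textsc{cherryExpansion} is now followed by \textsc{subtreeExpansion} and \textsc{chainExpansion}, which reverse the initially performed subtree and chain reductions; each such expansion is a local graph-surgery operation and runs in time polynomial in $n$, and it is performed only at the leaves of the search tree, of which there are $O(3^{14k})$. For Modification~3, the additional legitimacy test on each candidate forest $\cF'$ is, as noted, performable in polynomial time, so it contributes a factor $p(n)$ per leaf of the search tree; by~\cite{sempbordfpt2007} a returned forest corresponds to a maximum-acyclic-agreement forest of the original $S,T$ precisely when it is both acyclic and legitimate, so the output set is exactly right. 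Folding all polynomial factors together, the total running time is $O(3^{14k}\cdot p(n)) + O(n^3)$; absorbing the polynomial-per-leaf cost into the constant and noting $3^{14k}$ dominates any fixed polynomial in $k$, this is $O(3^{14k}+n^3)$ as claimed.

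The main obstacle I anticipate is not the counting of recursive calls — that is immediate from Theorem~\ref{t:rt} — but rather the bookkeeping needed to justify that the \textsc{allMAAFs} search over the \emph{kernel} trees $S_0,T_0$, followed by the expansion operations, really does enumerate \emph{all} maximum-acyclic-agreement forests of the \emph{original} trees $S,T$, and only those. This requires that the bijective correspondence between maximum-acyclic-agreement forests of $(S,T)$ and legitimate-acyclic-agreement forests of $(S_0,T_0)$ established in~\cite{sempbordfpt2007} be compatible with the order in which \textsc{allMAAFs} performs cherry expansions, and that the global weight variable $w$ correctly accounts for chain contributions to the component count. Since the corollary is stated only as an immediate consequence of Theorem~\ref{t:rt} and the cited kernelization, I would keep this justification brief, citing~\cite[Lemmas 3.3]{sempbordfpt2007} and the legitimacy-forest correspondence there, and emphasize only that every added operation is polynomial-time and applied once per search-tree leaf, so the exponential term is untouched.
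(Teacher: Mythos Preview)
Your proposal matches the paper's own treatment, which gives no formal proof and simply declares the corollary ``an immediate consequence of Theorem~\ref{t:rt}, and the kernelization ideas that are presented in~\cite{sempbordfpt2007}.'' Your expansion of that sentence into an explicit argument is sound in outline and uses exactly the ingredients the paper points to.

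There is, however, one genuine slip in your write-up. You arrive at a bound of $O(3^{14k}\cdot p(n)) + O(n^3)$ and then write ``absorbing the polynomial-per-leaf cost into the constant \ldots\ this is $O(3^{14k}+n^3)$.'' That absorption is illegitimate as stated: $p(n)$ depends on $n$, which is not bounded by any function of $k$, so $O(3^{14k}\cdot p(n))$ is in general strictly larger than $O(3^{14k}+n^3)$. The correct way to obtain the additive form is to note that, once the kernel $(S_0,T_0)$ has been computed, every operation performed inside a node of the search tree of \textsc{allMAAFs}---choosing a cherry, deleting an edge, restricting $R$, performing a cherry reduction, and testing acyclicity and legitimacy---acts on objects whose size is $O(k)$, so the per-node cost is polynomial in $k$, not in $n$. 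A $\mathrm{poly}(k)$ factor can then be absorbed into the $O(3^{14k})$ term in the usual loose sense. The only work that genuinely involves $n$ is the initial kernelization, costing $O(n^3)$, and the final \textsc{subtreeExpansion}/\textsc{chainExpansion} on the returned forests; the paper is silent on the latter, so you are not missing anything it supplies, but your stated justification for passing from the multiplicative to the additive bound is incorrect and should be replaced by the ``per-node cost is $\mathrm{poly}(k)$'' argument.
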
 

\section{Conclusions}\label{sec:conclu}
A topical question in current mathematical research on reticulate evolution is how to construct {\it all} rooted 
hybridization networks that display a pair of rooted binary phylogenetic trees such that the number of 
hybridization vertices is minimized.  In this paper, we have made
a first step towards achieving this goal by developing the first non-naive
algorithm---called {\sc allMAAFs}---that computes all maximum-acyclic-agreement forests for two
rooted binary phylogenetic trees  on the same taxa set. {While this paper describes the theoretical framework of {\sc allMAAFs} and establishes the algorithm's correctness, a practical implementation is published in a forthcoming paper~\cite{ASH} and freely available as part of {\sc Dendroscope}~\cite{Huson2007Dendroscope}.} Note that despite the worst-case running time of {\sc allMAAFs}, the algorithm seems to perform well in practice \cite{ASH} for simulated and biological data sets. It is part of ongoing research to
extend 
the algorithm {\sc HybridPhylogeny}~\cite{BSS06} in order to compute \emph{all}
possible hybridization networks that display a pair of rooted phylogenetic trees and whose number of hybridization vertices is minimized for when a maximum-acyclic-agreement
forest for these two trees given. In combination with {\sc allMAAFs}, such an algorithm will then compute all possible
minimum hybridization networks that display a pair of phylogenetic trees.

\section*{Acknowledgements}
We thank Daniel Huson for helpful discussions. Financial support from the University of T\"ubingen is gratefully acknowledged.

\bibliographystyle{plain} 
\bibliography{bibliographyleo}

\end{document}